\documentclass[journal]{IEEEtran}
\PassOptionsToPackage{ruled, linesnumbered}{algorithm2e}
\usepackage[T1]{fontenc}
\usepackage[utf8]{inputenc}
\usepackage{color}
\usepackage{mathrsfs}
\usepackage{bm}
\usepackage{algorithm2e}
\usepackage{amsmath}
\usepackage{amsthm}
\usepackage{amssymb}
\usepackage{graphicx}
\usepackage{cite}
\usepackage{balance}
\usepackage{paralist}
\usepackage[bookmarks=true,bookmarksnumbered=true,bookmarksopen=true,bookmarksopenlevel=1,
 breaklinks=false,pdfborder={0 0 0},pdfborderstyle={},backref=false,colorlinks=false]
 {hyperref}
\hypersetup{pdftitle={Your Title},
 pdfauthor={Your Name},
 pdfpagelayout=OneColumn, pdfnewwindow=true, pdfstartview=XYZ, plainpages=false}

\makeatletter
\theoremstyle{plain}
\newtheorem{thm}{\protect\theoremname}
\theoremstyle{proposition}
\newtheorem{prop}{\protect\propositionname}
\theoremstyle{remark}
\newtheorem{rem}[thm]{\protect\remarkname}

\usepackage[caption=false,font=footnotesize]{subfig}
\usepackage{hyperref}
\hypersetup{
    colorlinks=true,
    linkcolor=blue,
    filecolor=magenta,      
    urlcolor=blue,
	citecolor=blue,
    }

\usepackage{amsmath, amsfonts, amssymb, graphicx}
\newcommand{\op}[1]{\operatorname{#1}}
\newcommand{\herm}{^{\mathsf{H}}}
\newcommand{\trans}{^{\mathsf{T}}}
\newcommand{\conj}{^{*}}

\DeclareMathOperator{\tr}{\mathsf{tr}}
\DeclareMathOperator{\vecd}{\mathsf{vec_d}}

\DeclareMathOperator{\diag}{\mathsf{diag}}
\DeclareMathOperator{\maximize}{maximize}
\DeclareMathOperator{\st}{subject~to}
\DeclareMathOperator{\argmin}{argmin}

\DeclareMathOperator{\cov}{\mathsf{Cov}}
\DeclareMathOperator{\vecop}{\mathsf{vec}}
\DeclareMathOperator{\unvec}{\mathsf{unvec}}

\allowdisplaybreaks

\makeatother

\providecommand{\remarkname}{Remark}
\providecommand{\theoremname}{Theorem}
\providecommand{\propositionname}{Proposition}

\begin{document}
\title{A Novel Framework for Transmitter Privacy in Integrated Sensing and Communication}
\author{Vaibhav~Kumar,~\IEEEmembership{Member,~IEEE,} Ahmad~Bazzi,~\IEEEmembership{Senior~Member,~IEEE,} \\
Christina~Pöpper,~\IEEEmembership{Senior~Member,~IEEE,} and Marwa~Chafii,~\IEEEmembership{Senior~Member,~IEEE}\thanks{Vaibhav Kumar, Ahmad Bazzi, and Marwa Chafii are with the Wireless Research Lab, Engineering Division, New York University Abu Dhabi (NYUAD), UAE. Ahmad Bazzi and Marwa Chafii are also with NYU WIRELESS, NYU Tandon School of Engineering, NewYork, USA (e-mail: vaibhav.kumar@ieee.org; ahmad.bazzi@nyu.edu; marwa.chafii@nyu.edu).\protect \\
Christina Pöpper is heading the Cyber Security \& Privacy (CSP) Lab, Center for Cyber Security, Science Division, New York University Abu Dhabi (NYUAD), UAE (e-mail: christina.poepper@nyu.edu).}\thanks{This work was supported by the Center for Cyber Security through New York University Abu Dhabi Research Institute under Award G1104. The work of Marwa Chafii was also supported in part by Tamkeen under the Research Institute NYUAD grant CG017.}}

\maketitle

\begin{abstract}
	Integrated sensing and communication (ISAC) systems introduce new privacy risks, since an unintended sensing node may exploit the shared radio waveform to infer transmitter-related information even when the communication payload itself remains secure. In this paper, transmitter privacy, defined here as limiting unauthorized inference of transmitter-related information through channel estimation, is investigated in a reconfigurable intelligent surface (RIS)-aided multi-antenna wireless system comprising a transmitter, a legitimate receiver, a malicious sensor, and a RIS. The malicious sensor is assumed to estimate the transmitter--sensor channel, and the acquired channel state information may subsequently be used for unauthorized sensing, inference, or related signal-processing tasks. To counter this threat, a privacy-oriented design is considered in which the transmitter employs a superposition-based signaling strategy combining a message-bearing signal and transmit-side artificial noise (AN), while the RIS is used to shape the propagation environment in a privacy-aware manner. The channel-estimation performance at the malicious sensor is first characterized under imperfect prior knowledge, and both the true and predicted mean-square-error expressions are derived. Based on this characterization, a joint active--passive beamforming design problem is formulated to maximize the malicious sensor's predicted channel-estimation error subject to a communication quality-of-service requirement, a transmit-power budget, and the unit-modulus constraints of the RIS. The resulting non-convex problem is addressed via a numerically efficient alternating optimization framework based on an augmented Lagrangian reformulation. Numerical results show that RIS-assisted propagation shaping can significantly impair unauthorized channel estimation compared to the non-RIS counterpart while maintaining reliable communication, and further reveal that the resulting privacy gains extend to a more direct sensing metric, namely the angle-of-arrival (AoA) estimation accuracy at the malicious sensor.
\end{abstract}

\begin{IEEEkeywords}
Sensing-centric security, integrated sensing and communication (ISAC), reconfigurable intelligent surface (RIS), channel estimation
\end{IEEEkeywords}

\IEEEpeerreviewmaketitle{}

\section{Introduction}

\IEEEPARstart{I}{ntegrated} sensing and communication (ISAC) is emerging as an important architectural direction for future wireless networks, as it enables both data transmission and environmental sensing over a common radio platform~\cite{26_COMST_ISAC_over_the_years, 23_COMST_Marwa, 26_TWC_Ling}. Through the shared use of spectrum, infrastructure, hardware chain, and signaling resources, ISAC can improve spectrum utilization, reduce deployment cost, and support a broad range of emerging services, including localization and tracking, autonomous driving, industrial automation, smart environments, digital twins, and immersive extended-reality applications~\cite{25_OJCOMS_MagboolSurvey}. These advantages have attracted significant interest from academia and industry, and have positioned ISAC as a realistic design paradigm for next-generation wireless systems.

This momentum is also reflected in the ongoing 3GPP Release~19 effort, where ISAC has entered formal study activities through TR~22.837~\cite{3gpp_tr22837}, service-level specification efforts through TS~22.137~\cite{3gpp_ts22137}, and channel modeling efforts in TR~38.901~\cite{3gpp_tr38901}. More broadly, this indicates that future wireless networks may no longer treat sensing as a secondary function supported indirectly by communication, but instead as a native capability embedded into the network architecture. At the same time, however, such close integration also brings new concerns. When communication and sensing share the same radio interface, the transmitted and reflected signals may reveal much more information about users, devices, and the surrounding environment than in conventional communication-only systems. Consequently, in addition to the performance gains enabled by ISAC, it becomes necessary to understand the security and privacy risks created by this tight coupling between communication and environmental inference.

Recent research has begun to address the security challenges of ISAC from a physical-layer perspective~\cite{24_IoTM_ISAC_Priv_Secure,25_COMST_Secure_ISAC_Survey}. A first line of work focuses on secrecy-oriented secure ISAC design, including secure beamforming~\cite{24_TWC_SecureBeamforming-I, 25_TVT_SecureBeamforming-II, 26_JSAC_Comm3D_Sense2D_SecureBeamforming-III}, artificial-noise (AN) injection~\cite{23_TCOM_AN-I, 24_TIFS_AN-II, 25_TCOM_AN-III}, and secrecy-rate maximization~\cite{22_TCOM_Secure_ISAC_SecrecyRateMaximization-I, 26_JSAC_SecrecyRateMaximization-II}, where the objective is to protect confidential communication while maintaining the desired sensing and communication functionalities. The aforementioned direction has also been extended to reconfigurable intelligent metasurface (RIS)-assisted settings, where the programmable propagation environment is jointly optimized with the transmission design to enhance communication confidentiality~\cite{25_IoTJ_Secure_RIS_ISAC-I, 25_TWC_Kumar_Secure_RIS_ISAC-II, 25_ICC_Ainara_Secure_RIS_ISAC-III,	26_TWC_Secure_RIS_ISAC-IV, 26_JSAC_Secure_RIS_ISAC-V}. In parallel, covert ISAC has also gained significant attention, where the objective is not merely to secure the message content but to conceal the very existence of communication from a warden~\cite{25_TWC_Covert-I, 25_TWC_Covert-II, 26_JSAC_Covert-III}.

Compared to conventional communication-only systems, security in ISAC has a fundamentally broader meaning, since it involves not only \emph{communication-centric} security but also \emph{sensing-centric} security. Communication-centric security follows the classical physical-layer security paradigm, where the goal is to prevent an unintended receiver from decoding the confidential message. In ISAC, communication-centric security can be broadly achieved in two ways. When the eavesdropper is spatially separated from the sensing target, the transmitter may steer nulls toward the suspicious direction to degrade the received signal quality at the eavesdropper. When the eavesdropper coincides with the sensing target, the transmitter may instead inject strong interference or AN so as to impair message decoding at that node. However, such approaches are not sufficient for sensing-centric security, where the malicious node may not be interested in the communication message at all, but rather in performing channel estimation, parameter estimation, detection, localization, or other sensing-related inference tasks. In this case, high interference or noise does not necessarily eliminate the sensing threat, and steering a null toward the malicious sensor may itself be infeasible, particularly when the transmitter has only a small number of antennas, as is typical for user devices. Therefore, unlike conventional physical-layer security, sensing-centric security in ISAC cannot be characterized solely through communication secrecy, and instead calls for dedicated designs that explicitly limit unauthorized sensing and inference.

Encouragingly, a small but growing body of work have begun to move beyond communication-centric secrecy and toward sensing-centric security in ISAC. In particular, RIS-assisted target-obfuscation designs have been developed to protect a sensing region from an adversarial detector by jointly optimizing the transmit beamformer, RIS configuration, and RIS meta-atom assignment so as to reduce the detector's sensing capability while preserving the intended communication and sensing functions~\cite{26_TWC_Magbool_TargetObfuscation}. In a related direction, transmitter location privacy has been investigated through beam-pattern obfuscation, where the angular power distribution observed by a sensing-capable receiver is deliberately reshaped so that a false direction appears dominant without nulling the line-of-sight component, thereby showing that privacy may be achieved through controlled obfuscation rather than pure suppression~\cite{26_ICC_Kundu}. More recently, sensing-secure ISAC signaling has been proposed via \emph{ambiguity-function engineering}, where artificial ghost targets are introduced into the unauthorized observer's range profile to degrade its sensing performance without requiring prior knowledge of the observer's channel state information (CSI)~\cite{26_TWC_AmbiguityFunctionEngineering}.


These works constitute important first steps toward sensing-centric security, since they explicitly recognize that unauthorized sensing may remain possible even when communication secrecy is preserved. Here, sensing-centric security refers to the broader objective of preventing, limiting, or degrading malicious sensing capabilities in ISAC systems, whereas sensing privacy more specifically concerns protecting sensitive information about legitimate entities from being inferred through sensing. In this paper, we use the term \emph{transmitter privacy} to denote the ability of a transmitter to communicate while limiting an unauthorized sensing node's ability to infer transmitter-related information from the shared waveform, particularly through estimation of the transmitter--sensor channel and the downstream inference tasks enabled by that estimate, such as localization, tracking, or AoA inference. Under this definition, transmitter privacy is broader than transmitter location privacy, which focuses only on concealing the transmitter position, and is also distinct from target obfuscation or waveform-level sensing degradation, which primarily protect the sensed target or disrupt a particular sensing output. By contrast, the setting of interest here focuses on malicious channel estimation as a fundamental leakage mechanism, since an estimate of the transmitter--sensor channel can serve as an enabler for subsequent coherent processing, parameter inference, and environment-aware sensing tasks.

Motivated by the above observations, this paper studies \emph{RIS-aided transmitter privacy in a multi-antenna wireless system} with a legitimate receiver and a malicious sensing node. The malicious node is assumed to estimate transmitter-related channel information from the received waveform and use it for unauthorized sensing or inference. To impair such malicious estimation, we propose a privacy-oriented joint design of the transmit precoder and RIS reflection coefficients, where AN and RIS-assisted propagation shaping are exploited to degrade the sensing capability of the malicious node while guaranteeing reliable communication at the legitimate receiver. The resulting design problem is non-convex due to the coupling between active transmission and passive propagation control. An efficient alternating optimization (AO)-based approach is developed to solve it. The results show that the proposed RIS-aided design achieves clear transmitter-privacy gains compared with a non-RIS counterpart.

The main contributions of this work are summarized as follows:
\begin{compactenum}
    \item A RIS-aided framework for transmitter privacy is developed for a wireless system with a legitimate receiver and a malicious sensing node, where transmitter privacy is defined as limiting the unauthorized inference of transmitter-related information through malicious estimation of the transmitter--sensor channel. Within this framework, malicious channel estimation is treated as a fundamental leakage mechanism that can enable subsequent sensing and inference tasks, and transmitter privacy is operationalized through the deliberate degradation of such unauthorized channel estimation.
	\item A privacy-oriented joint design problem is formulated for the transmit precoder and RIS reflection coefficients, with the objective of maximizing the malicious sensor's predicted channel-estimation error while ensuring reliable communication at the legitimate receiver under a transmit-power budget and the unit-modulus constraints of the RIS.
	
	\item To solve the resulting non-convex problem, an efficient AO-based algorithm is proposed for the joint design of active transmit signaling and passive RIS control, and its convergence and computational complexity are analyzed.
	
	\item Numerical results demonstrate that RIS-assisted propagation shaping can significantly degrade unauthorized channel estimation relative to non-RIS benchmarks while preserving the desired communication performance, further illustrate the effects of key system parameters such as the RIS size, antenna configuration, observation length, transmit power, and prior mismatch, and show that the resulting privacy gains also manifest in degraded angle-of-arrival (AoA) estimation at the malicious sensor.
\end{compactenum}

The remainder of this paper is organized as follows. Section~\ref{sec:System-Model-and-Problem-Formulation} presents the considered system model and introduces the channel-estimation framework at the malicious sensor, along with the adopted privacy metric and problem formulation. Section~\ref{sec:AO-Based-Proposed-Solution} develops the proposed AO-based design for the joint optimization of the transmit precoder and RIS reflection coefficients. Section~\ref{sec:Results-and-Discussion} provides numerical results to evaluate the privacy performance of the proposed scheme and to illustrate the impact of key system parameters. Finally, Section~\ref{sec:Conclusion} concludes the paper.

\paragraph*{Notation}

Bold uppercase and lowercase letters denote matrices and vectors, respectively. The set of all $M\times N$ complex-valued matrices is denoted by $\mathbb{C}^{M\times N}$, and the set of positive real numbers is denoted by $\mathbb R_+$. For a matrix $\mathbf{X}$, the transpose, Hermitian transpose, trace, determinant, Frobenius norm, and expectation are denoted by $\mathbf{X}\trans$, $\mathbf{X}\herm$, $\tr(\mathbf{X})$, $\det(\mathbf{X})$, $\|\mathbf{X}\|_{\mathsf{F}}$, and $\mathbb{E}\{\mathbf{X}\}$, respectively. The notation $\|\cdot\|$ denotes the Euclidean norm for vectors. The operator $\diag(\mathbf{x})$ returns a diagonal matrix whose main diagonal is formed by the entries of $\mathbf{x}$, $\vecd(\mathbf{X})$ denotes the column vector formed by the diagonal entries of $\mathbf{X}$, $\vecop(\mathbf{X})$ denotes the column vector obtained by stacking the columns of $\mathbf{X}$, $\unvec_{M\times N}(\mathbf{x})$ denotes the inverse operation of $\vecop(\cdot)$ that reshapes $\mathbf{x}$ into an $M\times N$ matrix, and $\mathbf{X}\otimes\mathbf{Y}$ denotes the Kronecker product of $\mathbf{X}$ and $\mathbf{Y}$. For a real-valued function $f(\cdot)$, the gradient with respect to a complex-valued variable $\mathbf{X}$ is defined as $\nabla_{\mathbf{X}} f(\cdot)\triangleq \frac{\partial f(\cdot)}{\partial \mathbf{X}^{*}}=\frac{1}{2}\left[\frac{\partial f(\cdot)}{\partial \Re\{\mathbf{X}\}}+j\frac{\partial f(\cdot)}{\partial \Im\{\mathbf{X}\}}\right]$, where $\mathbf{X}^{*}$ denotes the complex conjugate of $\mathbf{X}$, and $\Re\{\mathbf{X}\}$ and $\Im\{\mathbf{X}\}$ denote its real and imaginary parts, respectively. The operator $\op{d}(\cdot)$ denotes the differential of its argument. The Euclidean projection of $\mathbf{x}$ onto a feasible set $\mathcal{X}$ is defined as $\Pi_{\mathcal{X}}(\mathbf{x})\triangleq\argmin_{\tilde{\mathbf{x}}\in\mathcal{X}}\|\tilde{\mathbf{x}}-\mathbf{x}\|_2$.

\section{System Model and Problem Formulation\label{sec:System-Model-and-Problem-Formulation}}


We consider the RIS-aided communication system shown in Fig.~\ref{fig:sysMod}, which consists of a transmitter (A), a legitimate receiver (B), a malicious sensor (S), and an RIS (R).\footnote{For analytical tractability, this work restricts attention to a single legitimate receiver and a single malicious sensor; extensions to multi-user and multi-attacker settings are left for future work.} The numbers of antennas at A, B, and S are denoted by $m_{\mathrm{A}}$, $m_{\mathrm{B}}$, and $m_{\mathrm{S}}$, respectively, while the RIS is equipped with $m_{\mathrm{R}}$ reflecting meta-atoms. The wireless channels corresponding to the A--B, A--S, A--R, R--B, and R--S links are denoted by $\mathbf{H}_{\mathrm{AB}}\in\mathbb{C}^{m_{\mathrm{B}}\times m_{\mathrm{A}}}$, $\mathbf{H}_{\mathrm{AS}}\in\mathbb{C}^{m_{\mathrm{S}}\times m_{\mathrm{A}}}$, $\mathbf{H}_{\mathrm{AR}}\in\mathbb{C}^{m_{\mathrm{R}}\times m_{\mathrm{A}}}$, $\mathbf{H}_{\mathrm{RB}}\in\mathbb{C}^{m_{\mathrm{B}}\times m_{\mathrm{R}}}$, and $\mathbf{H}_{\mathrm{RS}}\in\mathbb{C}^{m_{\mathrm{S}}\times m_{\mathrm{R}}}$, respectively. In this system, A communicates with B, while S seeks to estimate the transmitter--sensor channel $\mathbf{H}_{\mathrm{AS}}$, thereby enabling unauthorized sensing and inference about the transmitter.

\subsection{Transmit Signal Model}

The transmitter adopts a superposition-based signaling strategy that combines a message-bearing communication signal with a transmit-side AN component. Accordingly, the transmitted signal vector is expressed as
\begin{equation}
	\mathbf{x}=\mathbf{F}_{\mathrm{c}}\mathbf{w}_{\mathrm{c}}+\mathbf{F}_{\mathrm{s}}\mathbf{w}_{\mathrm{s}},
	\label{eq:tx_signal_from_A}
\end{equation}
where $\mathbf{w}_{\mathrm{c}}\in\mathbb{C}^{m_{\min}\times 1}$ denotes the communication symbol vector, with $1\leq m_{\min}\leq \min\{m_{\mathrm{A}},m_{\mathrm{B}}\}$, and $\mathbf{F}_{\mathrm{c}}\in\mathbb{C}^{m_{\mathrm{A}}\times m_{\min}}$ denotes the corresponding communication precoder. In addition, $\mathbf{w}_{\mathrm{s}}\in\mathbb{C}^{m_{\mathrm{A}}\times 1}$ denotes the transmit-side AN vector, and $\mathbf{F}_{\mathrm{s}}\in\mathbb{C}^{m_{\mathrm{A}}\times m_{\mathrm{A}}}$ is its associated precoder. Notably, the dimensionality of $\mathbf{F}_{\mathrm{s}}$ depends only on the transmitter-side dimensions and is therefore independent of the sensor antenna configuration. We assume that $\mathbb{E}\{\mathbf{w}_{\mathrm{c}}\}=\mathbb{E}\{\mathbf{w}_{\mathrm{s}}\}=\mathbf{0}$, $\mathbb{E}\{\mathbf{w}_{\mathrm{c}}\mathbf{w}_{\mathrm{c}}^{\mathsf{H}}\}=\mathbf{I}_{m_{\min}}$, $\mathbb{E}\{\mathbf{w}_{\mathrm{s}}\mathbf{w}_{\mathrm{s}}^{\mathsf{H}}\}=\mathbf{I}_{m_{\mathrm{A}}}$, and $\mathbb{E}\{\mathbf{w}_{\mathrm{c}}\mathbf{w}_{\mathrm{s}}^{\mathsf{H}}\}=\mathbf{0}$. Unless otherwise stated, we set $m_{\min}=\min\{m_{\mathrm{A}},m_{\mathrm{B}}\}$ for notational simplicity.


The AN component is introduced to impair the sensor's ability to estimate $\mathbf{H}_{\mathrm{AS}}$ accurately, thereby enhancing transmitter privacy. Since the AN signal is given by $\mathbf{F}_{\mathrm{s}}\mathbf{w}_{\mathrm{s}}$ and $\mathbb{E}\{\mathbf{w}_{\mathrm{s}}\mathbf{w}_{\mathrm{s}}^{\mathrm{H}}\}=\mathbf{I}$, its covariance is $\mathbf{F}_{\mathrm{s}}\mathbf{F}_{\mathrm{s}}^{\mathrm{H}}$, which shows that the AN spatial covariance is optimized implicitly through the precoder $\mathbf{F}_{\mathrm{s}}$. Moreover, the dedicated AN precoder provides an additional spatial degree of freedom for RIS-assisted channel manipulation. In particular, by jointly designing $\mathbf{F}_{\mathrm{s}}$ and the RIS reflection coefficients, the reflected interference can be spatially shaped and preferentially directed toward the sensor through the cascaded A--R--S link, while maintaining the desired communication performance at the legitimate receiver.
\begin{figure}
\centering 
\includegraphics[width=0.8\columnwidth]{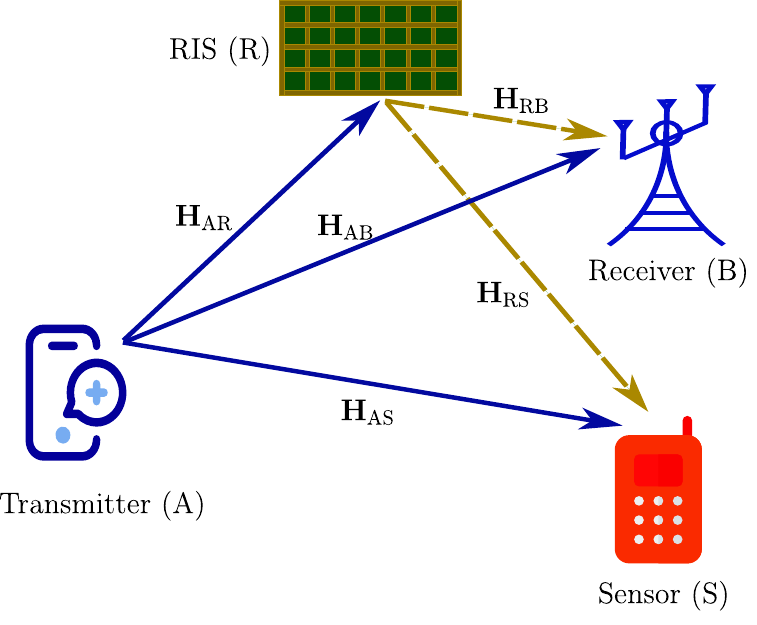}
\caption{A typical RIS-aided communication system with a malicious sensor.}
\label{fig:sysMod}
\end{figure}
\subsection{Communication Model}
Using~\eqref{eq:tx_signal_from_A}, the received signal at B is given by
\begin{equation}
	\mathbf{y}_{\mathrm{B}}=(\mathbf{H}_{\mathrm{AB}}+\mathbf{H}_{\mathrm{RB}}\boldsymbol{\Theta}\mathbf{H}_{\mathrm{AR}})\mathbf{x}+\mathbf{n}_{\mathrm{B}}=\mathbf{Z}_{\mathrm{AB}}\mathbf{x}+\mathbf{n}_{\mathrm{B}},
	\label{eq:rx_signal_at_B}
\end{equation}
where $\boldsymbol{\Theta}=\diag(\boldsymbol{\theta})$ denotes the RIS reflection matrix, and $\mathbf{n}_{\mathrm{B}}\sim\mathcal{CN}(\mathbf{0},\sigma_{\mathrm{B}}^{2}\mathbf{I}_{m_{\mathrm{B}}})$ denotes the additive white Gaussian noise (AWGN) vector at B. Moreover, the RIS reflection-coefficient vector is modeled as $\boldsymbol{\theta}=[\theta_{1},\theta_{2},\ldots,\theta_{m_{\mathrm{R}}}]\trans\in\mathbb{C}^{m_{\mathrm{R}}\times 1}$, where $\theta_{\varkappa}=\exp(j\phi_{\varkappa})$ with $\phi_{\varkappa}\in[0,2\pi)$ denoting the phase shift induced by the $\varkappa$-th RIS meta-atom, for all $\varkappa\in\mathcal{M}_{\mathrm{R}}\triangleq\{1,2,\ldots,m_{\mathrm{R}}\}$.

We assume that A has imperfect CSI for the A--B and R--B links.\footnote{This assumption is adopted for analytical tractability and to isolate the impact of RIS-assisted interference shaping and AN design. It is also standard in RIS-aided systems, since the A--R link can often be estimated more accurately through calibration and control signaling between the transmitter and the RIS controller, and its estimation errors are comparatively less critical to the communication-privacy trade-off considered in this work.} Specifically, we model $\mathbf{H}_{\mathrm{AB}}=\widehat{\mathbf{H}}_{\mathrm{AB}}+\bm{\Delta}_{\mathrm{AB}}$ and $\mathbf{H}_{\mathrm{RB}}=\widehat{\mathbf{H}}_{\mathrm{RB}}+\bm{\Delta}_{\mathrm{RB}}$, where $\widehat{\mathbf{H}}_{\mathrm{AB}}$ and $\widehat{\mathbf{H}}_{\mathrm{RB}}$ are the available channel estimates, and $\bm{\Delta}_{\mathrm{AB}}$ and $\bm{\Delta}_{\mathrm{RB}}$ denote the corresponding CSI error matrices. The entries of $\bm{\Delta}_{\mathrm{AB}}$ and $\bm{\Delta}_{\mathrm{RB}}$ are assumed to be independent and identically distributed (i.i.d.) and follow $\mathcal{CN}(0,\varsigma_{\mathrm{AB}}^{2})$ and $\mathcal{CN}(0,\varsigma_{\mathrm{RB}}^{2})$, respectively~\cite{24-TWC-KeshavSingh}.

Based on~\eqref{eq:rx_signal_at_B}, the achievable rate at B for a given $(\mathbf{F},\boldsymbol{\theta})$, measured in nats/s/Hz, is conservatively modeled as
\begin{equation}
	C_{\mathrm{AB}}(\mathbf{F},\boldsymbol{\theta})=\ln\det\big(\mathbf{I}+\widehat{\mathbf{Z}}_{\mathrm{AB}}\mathbf{F}_{\mathrm{c}}\mathbf{F}_{\mathrm{c}}\herm\widehat{\mathbf{Z}}_{\mathrm{AB}}\herm\mathbf{Q}_{\mathrm{AB}}^{-1}\big),\label{eq:C_AB}
\end{equation}
where $\mathbf{F}\triangleq[\mathbf{F}_{\mathrm{c}},\mathbf{F}_{\mathrm{s}}]$, $\widehat{\mathbf{Z}}_{\mathrm{AB}}=\widehat{\mathbf{H}}_{\mathrm{AB}}+\widehat{\mathbf{H}}_{\mathrm{RB}}\boldsymbol{\Theta}\mathbf{H}_{\mathrm{AR}}$ denotes the estimated composite A--B channel, and $\mathbf{Q}_{\mathrm{AB}}$ denotes the interference-plus-noise covariance matrix. Using the independence of $\bm{\Delta}_{\mathrm{AB}}$ and $\bm{\Delta}_{\mathrm{RB}}$, a conservative approximation\footnote{The approximation is conservative because the CSI-error terms are not exploited as part of the useful signal; instead, their contribution is entirely incorporated into $\mathbf{Q}_{\mathrm{AB}}$ as effective interference-plus-noise via their second-order moments. As a result, the rate expression is a pessimistic surrogate for the actual achievable rate under imperfect CSI.} of $\mathbf{Q}_{\mathrm{AB}}$ is given by
\begin{align}
	\mathbf{Q}_{\mathrm{AB}}=\  & \sigma_{\mathrm{B}}^{2}\mathbf{I}_{m_{\mathrm{B}}}+\widehat{\mathbf{Z}}_{\mathrm{AB}}\mathbf{F}_{\mathrm{s}}\mathbf{F}_{\mathrm{s}}\herm\widehat{\mathbf{Z}}_{\mathrm{AB}}\herm\nonumber \\
	& +\!\mathbb{E}\big\{\boldsymbol{\Delta}_{\mathrm{AB}}\big(\mathbf{F}_{\mathrm{c}}\mathbf{F}_{\mathrm{c}}\herm\!+\!\mathbf{F}_{\mathrm{s}}\mathbf{F}_{\mathrm{s}}\herm\big)\boldsymbol{\Delta}_{\mathrm{AB}}\herm\big\}\!\!+\!\!\mathbb{E}\big\{\big(\boldsymbol{\Delta}_{\mathrm{RB}}\boldsymbol{\Theta}\mathbf{H}_{\mathrm{AR}}\big)\nonumber \\
	& \times\big(\mathbf{F}_{\mathrm{c}}\mathbf{F}_{\mathrm{c}}\herm\!+\!\mathbf{F}_{\mathrm{s}}\mathbf{F}_{\mathrm{s}}\herm\big)\big(\boldsymbol{\Delta}_{\mathrm{RB}}\boldsymbol{\Theta}\mathbf{H}_{\mathrm{AR}}\big)\herm\big\}\nonumber \\
	\overset{\langle\texttt{a}\rangle}{=}\  & \widehat{\mathbf{Z}}_{\mathrm{AB}}\mathbf{F}_{\mathrm{s}}\mathbf{F}_{\mathrm{s}}\herm\widehat{\mathbf{Z}}_{\mathrm{AB}}\herm+\big[\sigma_{\mathrm{B}}^{2}+\varsigma_{\mathrm{AB}}^{2}\tr\big(\mathbf{F}_{\mathrm{c}}\mathbf{F}_{\mathrm{c}}\herm\!+\!\mathbf{F}_{\mathrm{s}}\mathbf{F}_{\mathrm{s}}\herm\big)\nonumber \\
	& +\varsigma_{\mathrm{RB}}^{2}\tr(\mathbf{H}_{\mathrm{AR}}\big(\mathbf{F}_{\mathrm{c}}\mathbf{F}_{\mathrm{c}}\herm\!+\!\mathbf{F}_{\mathrm{s}}\mathbf{F}_{\mathrm{s}}\herm\big)\mathbf{H}_{\mathrm{AR}}\herm)\big]\mathbf{I}_{m_{\mathrm{B}}}, \label{eq:Q_AB_ClosedForm}
\end{align}
where step $\langle\texttt{a}\rangle$ follows from the identity $\mathbb{E}\{\mathbf{X}\mathbf{Y}\mathbf{X}\herm\}=\sigma^{2}\tr(\mathbf{Y})\mathbf{I}$ for a random matrix $\mathbf{X}$ with i.i.d. entries distributed as $\mathcal{CN}(0,\sigma^{2})$ and any deterministic matrix $\mathbf{Y}$ of compatible dimension.
\subsection{Sensing Model}
By listening to the transmission from A, the sensor S attempts to estimate the channel $\mathbf{H}_{\mathrm{AS}}$. To this end, S collects the transmitted signal from A over $K$ time-slots. For analytical tractability, we assume that all channels remain unchanged over these $K$ time-slots. With a slight abuse of notation, we denote the transmit signal from A over $K$ time-slots by $\mathbf{X}=\mathbf{X}_{\mathrm{c}}+\mathbf{X}_{\mathrm{s}}$, where $\mathbf{X}_{\mathrm{c}}=\mathbf{F}_{\mathrm{c}}\mathbf{W}_{\mathrm{c}}$, $\mathbf{X}_{\mathrm{s}}=\mathbf{F}_{\mathrm{s}}\mathbf{W}_{\mathrm{s}}$, $\mathbf{W}_{\mathrm{c}}=[\mathbf{w}_{\mathrm{c},1},\ldots,\mathbf{w}_{\mathrm{c},K}]\in\mathbb{C}^{m_{\min}\times K}$, and $\mathbf{W}_{\mathrm{s}}=[\mathbf{w}_{\mathrm{s},1},\ldots,\mathbf{w}_{\mathrm{s},K}]\in\mathbb{C}^{m_{\mathrm{A}}\times K}$. Here, $\mathbf{w}_{\mathrm{c},k}$ and $\mathbf{w}_{\mathrm{s},k}$ denote the communication and AN vectors transmitted during the $k$-th time-slot, respectively, for $k\in\mathcal{K}\triangleq\{1,2,\ldots,K\}$. Accordingly, the observation at S can be written as
\begin{align}
	\mathbf{Y}_{\mathrm{S}}=\  & \big(\mathbf{H}_{\mathrm{AS}}+\mathbf{H}_{\mathrm{RS}}\boldsymbol{\Theta}\mathbf{H}_{\mathrm{AR}}\big)\mathbf{X}+\mathbf{N}_{\mathrm{S}}\nonumber \\
	=\  & \mathbf{H}_{\mathrm{AS}}\mathbf{X}+\mathbf{H}_{\mathrm{RS}}\boldsymbol{\Theta}\mathbf{H}_{\mathrm{AR}}\mathbf{X}+\mathbf{N}_{\mathrm{S}},\label{eq:observation_at_S}
\end{align}
where the entries of $\mathbf{N}_{\mathrm{S}}$ are i.i.d. and follow $\mathcal{CN}(0,\sigma_{\mathrm{S}}^{2})$, and $\mathbf{N}_{\mathrm{S}}$ denotes the AWGN matrix at S. By vectorizing~\eqref{eq:observation_at_S}, the observation can be expressed as
\begin{equation}
	\mathbf{y}_{\mathrm{S}}=\vecop(\mathbf{Y}_{\mathrm{S}})=\widetilde{\mathbf{X}}\mathbf{h}_{\mathrm{AS}}+\breve{\mathbf{X}}\mathbf{h}_{\mathrm{RS}}+\mathbf{n}_{\mathrm{S}},\label{eq:vec_observation_at_S}
\end{equation}
where $\widetilde{\mathbf{X}}=\mathbf{X}\trans\otimes\mathbf{I}_{m_{\mathrm{S}}}$, $\breve{\mathbf{X}}=\big(\mathbf{X}\trans\mathbf{H}_{\mathrm{AR}}\trans\bm{\Theta}\big)\otimes\mathbf{I}_{m_{\mathrm{S}}}$, $\mathbf{h}_{\mathrm{AS}}=\vecop(\mathbf{H}_{\mathrm{AS}})$, $\mathbf{h}_{\mathrm{RS}}=\vecop(\mathbf{H}_{\mathrm{RS}})$, and $\mathbf{n}_{\mathrm{S}}=\vecop(\mathbf{N}_{\mathrm{S}})$. Here, we have used the identity $\vecop(\mathbf{X}\mathbf{Y})=\vecop(\mathbf{I}\mathbf{X}\mathbf{Y})=\big(\mathbf{Y}\trans\otimes\mathbf{I}\big)\vecop(\mathbf{X})$. We model the true channel statistics of $\mathbf{h}_{\mathrm{AS}}$ and $\mathbf{h}_{\mathrm{RS}}$ as $\mathbf{h}_{\mathrm{AS}}\sim\mathcal{CN}(\bm{\mu}_{\mathrm{AS}},\bm{\Sigma}_{\mathrm{AS}})$ and $\mathbf{h}_{\mathrm{RS}}\sim\mathcal{CN}(\bm{0},\bm{\Sigma}_{\mathrm{RS}})$, respectively.\footnote{The non-zero-mean model for $\mathbf{h}_{\mathrm{AS}}$ is retained because $\mathbf{h}_{\mathrm{AS}}$ is the channel to be inferred by the malicious sensor, and allowing a non-zero mean provides a more general prior model that can capture a deterministic component. By contrast, the zero-mean assumption for $\mathbf{h}_{\mathrm{RS}}$ is adopted primarily for analytical tractability. Specifically, since $\mathbf{h}_{\mathrm{RS}}$ appears as a nuisance term in the cascaded RIS contribution, modeling it as zero-mean ensures that its effect is fully captured by $\bm{\Sigma}_{\mathrm{RS}}$, which considerably simplifies the derivation of the mismatched linear minimum mean square error (LMMSE) estimator and the associated Bayesian MSE expressions.} To account for imperfect prior knowledge at S, we assume that only erroneous statistical information is available, namely $\widehat{\bm{\mu}}_{\mathrm{S,AS}}$, $\widehat{\bm{\Sigma}}_{\mathrm{S,AS}}$, and $\widehat{\bm{\Sigma}}_{\mathrm{S,RS}}$, which denote the presumed mean of $\mathbf{h}_{\mathrm{AS}}$, the presumed covariance of $\mathbf{h}_{\mathrm{AS}}$, and the presumed covariance of $\mathbf{h}_{\mathrm{RS}}$, respectively.

Under the LMMSE framework~\cite{91_Scharf_Stat_Sig_Proc}, the estimate of $\mathbf{h}_{\mathrm{AS}}$ at S is given by
\begin{equation}
	\widehat{\mathbf{h}}_{\mathrm{S,AS}}=\widehat{\bm{\mu}}_{\mathrm{S,AS}}+\mathbf{R}_{\mathrm{S}}(\mathbf{y}_{\mathrm{S}}-\widetilde{\mathbf{X}}\widehat{\bm{\mu}}_{\mathrm{S,AS}}),\label{eq:estimate_at_S}
\end{equation}
where $\mathbf{R}_{\mathrm{S}}$ denotes the mismatched LMMSE gain and is given by
\begin{equation}
	\mathbf{R}_{\mathrm{S}}=\widehat{\bm{\Sigma}}_{\mathrm{S,AS}}\widetilde{\mathbf{X}}\herm\big(\widetilde{\mathbf{X}}\widehat{\bm{\Sigma}}_{\mathrm{S,AS}}\widetilde{\mathbf{X}}\herm\!+\!\breve{\mathbf{X}}_{\mathrm{S}}\widehat{\bm{\Sigma}}_{\mathrm{S,RS}}\breve{\mathbf{X}}_{\mathrm{S}}\herm\!+\!\sigma_{\mathrm{S}}^{2}\mathbf{I}_{m_{\mathrm{S}}m_{\mathrm{A}}}\big)^{-1},\label{eq:LMMSE_filter_at_S}
\end{equation}
where $\breve{\mathbf{X}}_{\mathrm{S}}=(\mathbf{X}\trans\widehat{\mathbf{H}}_{\mathrm{S,AR}}\trans\bm{\Theta})\otimes\mathbf{I}_{m_{\mathrm{S}}}$, and $\widehat{\mathbf{H}}_{\mathrm{S,AR}}=\mathbf{H}_{\mathrm{AR}}+\bm{\delta}_{\mathrm{S,AR}}$ denotes the estimate of $\mathbf{H}_{\mathrm{AR}}$ available at S. Note that $\mathbf{y}_{\mathrm{S}}$ is generated using the true cascaded term, and hence depends on $\breve{\mathbf{X}}$, whereas the sensor designs the LMMSE gain using its presumed model, and hence depends on $\breve{\mathbf{X}}_{\mathrm{S}}$ and the erroneous priors.

The following proposition characterizes the mean, covariance, and Bayesian MSE of the estimation error at S under mismatched prior information.

\begin{prop} \label{Prop-1}
	Define the estimation error as $\mathbf{e}_{\mathrm{S}}\triangleq\mathbf{h}_{\mathrm{AS}}-\widehat{\mathbf{h}}_{\mathrm{S,AS}}$. Then, for a given $(\mathbf{F},\bm{\theta})$, its mean is given by
	\begin{equation}
		\mathbb{E}\{\mathbf{e}_{\mathrm{S}}\}=\big(\mathbf{I}_{m_{\mathrm{S}}m_{\mathrm{A}}}\!-\!\mathbf{R}_{\mathrm{S}}\widetilde{\mathbf{X}}\big)\big(\bm{\mu}_{\mathrm{AS}}\!-\!\widehat{\bm{\mu}}_{\mathrm{S,AS}}\big),\label{eq:mean_eS}
	\end{equation}
	and its covariance is given by
	\begin{multline}
		\cov\{\mathbf{e}_{\mathrm{S}}\} = (\mathbf{I}_{m_{\mathrm{S}}m_{\mathrm{A}}}\!-\!\mathbf{R}_{\mathrm{S}}\widetilde{\mathbf{X}})\bm{\Sigma}_{\mathrm{AS}}(\mathbf{I}_{m_{\mathrm{S}}m_{\mathrm{A}}}\!-\!\mathbf{R}_{\mathrm{S}}\widetilde{\mathbf{X}})\herm \\ +\mathbf{R}_{\mathrm{S}}(\bm{\Sigma}_{\mathrm{ARS}}\!+\!\sigma_{\mathrm{S}}^{2}\mathbf{I}_{m_{\mathrm{S}}m_{\mathrm{A}}})\mathbf{R}_{\mathrm{S}}\herm,\label{eq:cov_S}
	\end{multline}
	where $\bm{\Sigma}_{\mathrm{ARS}}\triangleq\breve{\mathbf{X}}\bm{\Sigma}_{\mathrm{RS}}\breve{\mathbf{X}}\herm$. Consequently, the true Bayesian MSE at S is given by~\eqref{eq:true_MSE}, shown on the next page. 
\end{prop}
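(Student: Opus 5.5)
The plan is to write the error as an affine function of the three independent random quantities $\mathbf{h}_{\mathrm{AS}}$, $\mathbf{h}_{\mathrm{RS}}$, and $\mathbf{n}_{\mathrm{S}}$, conditioned on $\mathbf{X}$ (equivalently on $(\mathbf{F},\bm{\theta})$ and the realized symbols). The gain $\mathbf{R}_{\mathrm{S}}$, $\widetilde{\mathbf{X}}$, and $\breve{\mathbf{X}}$ are then deterministic. We assume $\mathbf{h}_{\mathrm{AS}}$, $\mathbf{h}_{\mathrm{RS}}$, and $\mathbf{n}_{\mathrm{S}}$ are mutually independent, as the sensing model implicitly does. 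Substituting \eqref{eq:vec_observation_at_S} into \eqref{eq:estimate_at_S} gives
\begin{equation*}
\mathbf{e}_{\mathrm{S}}=(\mathbf{I}-\mathbf{R}_{\mathrm{S}}\widetilde{\mathbf{X}})(\mathbf{h}_{\mathrm{AS}}-\widehat{\bm{\mu}}_{\mathrm{S,AS}})-\mathbf{R}_{\mathrm{S}}\breve{\mathbf{X}}\mathbf{h}_{\mathrm{RS}}-\mathbf{R}_{\mathrm{S}}\mathbf{n}_{\mathrm{S}}.
\end{equation*}
The key point is that $\mathbf{R}_{\mathrm{S}}$ is built from the presumed quantities ($\widehat{\bm{\Sigma}}_{\mathrm{S,AS}}$, $\widehat{\bm{\Sigma}}_{\mathrm{S,RS}}$, $\breve{\mathbf{X}}_{\mathrm{S}}$), while the data are generated by the true $\breve{\mathbf{X}}$ and true statistics. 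Consequently, none of the usual LMMSE orthogonality simplifications may be used.

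\textbf{Mean.} Take expectations term by term. Since $\mathbb{E}\{\mathbf{h}_{\mathrm{RS}}\}=\mathbf{0}$ and $\mathbb{E}\{\mathbf{n}_{\mathrm{S}}\}=\mathbf{0}$, only the first term survives, and it yields \eqref{eq:mean_eS} directly.

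\textbf{Covariance.} Subtract the mean to obtain $\mathbf{e}_{\mathrm{S}}-\mathbb{E}\{\mathbf{e}_{\mathrm{S}}\}=(\mathbf{I}-\mathbf{R}_{\mathrm{S}}\widetilde{\mathbf{X}})(\mathbf{h}_{\mathrm{AS}}-\bm{\mu}_{\mathrm{AS}})-\mathbf{R}_{\mathrm{S}}\breve{\mathbf{X}}\mathbf{h}_{\mathrm{RS}}-\mathbf{R}_{\mathrm{S}}\mathbf{n}_{\mathrm{S}}$. Form the outer product and use independence so that all cross terms vanish. This leaves three terms:
\begin{itemize}
\item $(\mathbf{I}-\mathbf{R}_{\mathrm{S}}\widetilde{\mathbf{X}})\bm{\Sigma}_{\mathrm{AS}}(\cdot)\herm$ from the channel of interest;
\item $\mathbf{R}_{\mathrm{S}}\breve{\mathbf{X}}\bm{\Sigma}_{\mathrm{RS}}\breve{\mathbf{X}}\herm\mathbf{R}_{\mathrm{S}}\herm=\mathbf{R}_{\mathrm{S}}\bm{\Sigma}_{\mathrm{ARS}}\mathbf{R}_{\mathrm{S}}\herm$ from the cascaded RIS link;
\item $\sigma_{\mathrm{S}}^2\mathbf{R}_{\mathrm{S}}\mathbf{R}_{\mathrm{S}}\herm$ from the noise.
\end{itemize}
Grouping the last two gives \eqref{eq:cov_S}.

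\textbf{Bayesian MSE.} Use $\mathbb{E}\{\|\mathbf{e}_{\mathrm{S}}\|^2\}=\tr(\cov\{\mathbf{e}_{\mathrm{S}}\})+\|\mathbb{E}\{\mathbf{e}_{\mathrm{S}}\}\|^2$. Writing the squared norm of the mean as a trace involving $(\bm{\mu}_{\mathrm{AS}}-\widehat{\bm{\mu}}_{\mathrm{S,AS}})(\cdot)\herm$ produces the bias term.

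If the true MSE in \eqref{eq:true_MSE} is meant to be averaged over the random symbols $\mathbf{W}_{\mathrm{c}},\mathbf{W}_{\mathrm{s}}$ as well, I would state the result conditionally on $\mathbf{X}$. An unconditional expression would require expectations of the nonlinear inverse in $\mathbf{R}_{\mathrm{S}}$, which is not tractable in closed form. I expect this to be the only delicate point: clarifying that the expectation is taken over the channels and noise with $\mathbf{X}$ fixed. The rest is routine bookkeeping with independence and the definitions of $\widetilde{\mathbf{X}}$ and $\breve{\mathbf{X}}$.

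A minor dimensional check is also needed. The identity in \eqref{eq:LMMSE_filter_at_S} is written as $\mathbf{I}_{m_{\mathrm{S}}m_{\mathrm{A}}}$, whereas the observation has dimension $m_{\mathrm{S}}K$. I would keep the dimension consistent with $\mathbf{y}_{\mathrm{S}}$ in the noise term, so that the term becomes $\sigma_{\mathrm{S}}^2\mathbf{R}_{\mathrm{S}}\mathbf{R}_{\mathrm{S}}\herm$ regardless of how the identity is labeled.
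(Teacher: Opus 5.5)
Your proposal is correct and follows the paper's proof in Appendix~A essentially step for step: substitute the observation into the estimator, write $\mathbf{e}_{\mathrm{S}}$ as an affine function of $\mathbf{h}_{\mathrm{AS}}-\bm{\mu}_{\mathrm{AS}}$, $\mathbf{h}_{\mathrm{RS}}$ and $\mathbf{n}_{\mathrm{S}}$ with $\mathbf{X}$ held fixed, remove the cross terms by mutual independence, and finish with $\mathbb{E}\{\|\mathbf{x}\|^{2}\}=\|\mathbb{E}\{\mathbf{x}\}\|^{2}+\tr(\cov\{\mathbf{x}\})$. Your dimension remark is also right: since $\mathbf{y}_{\mathrm{S}}\in\mathbb{C}^{Km_{\mathrm{S}}\times 1}$, the identity in the noise term of $\mathbf{R}_{\mathrm{S}}$ and of the covariance should be $\mathbf{I}_{Km_{\mathrm{S}}}$ rather than $\mathbf{I}_{m_{\mathrm{S}}m_{\mathrm{A}}}$ (the paper's own proof simply writes $\sigma_{\mathrm{S}}^{2}\mathbf{I}$ there), and this does not affect the argument.
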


\begin{IEEEproof}
	See Appendix~\ref{sec:proof_Prop-1}.
\end{IEEEproof}

It is worth emphasizing that, in~\eqref{eq:true_MSE}, the filter $\mathbf{R}_{\mathrm{S}}$ is designed using the priors available at S, i.e., $\widehat{\bm{\Sigma}}_{\mathrm{S,AS}}$, $\widehat{\bm{\Sigma}}_{\mathrm{S,RS}}$, and $\breve{\mathbf{X}}_{\mathrm{S}}$, as indicated in~\eqref{eq:LMMSE_filter_at_S}.

Now, we assume that A also has erroneous prior information about $\mathbf{h}_{\mathrm{AS}}$ and $\mathbf{h}_{\mathrm{RS}}$, which is modeled as $\widehat{\mathbf{h}}_{\mathrm{A,AS}}\sim\mathcal{CN}(\widehat{\bm{\mu}}_{\mathrm{A,AS}},\widehat{\bm{\Sigma}}_{\mathrm{A,AS}})$ and $\widehat{\mathbf{h}}_{\mathrm{A,RS}}\sim\mathcal{CN}(\bm{0},\widehat{\bm{\Sigma}}_{\mathrm{A,RS}})$, respectively. Based on these priors, the corresponding mismatched MSE expression for $\mathbf{H}_{\mathrm{AS}}$ at A is given by~\eqref{eq:pred_MSE}, shown on the next page, 
\begin{figure*}
	\begin{multline}
		\mathsf{MSE}_{\mathrm{true}}(\mathbf{F},\bm{\theta}) = \|(\mathbf{I}_{m_{\mathrm{S}}m_{\mathrm{A}}}-\mathbf{R}_{\mathrm{S}}\widetilde{\mathbf{X}})(\bm{\mu}_{\mathrm{AS}}-\widehat{\bm{\mu}}_{\mathrm{S,AS}})\|^{2} \\
		+\tr\big[(\mathbf{I}_{m_{\mathrm{S}}m_{\mathrm{A}}}-\mathbf{R}_{\mathrm{S}}\widetilde{\mathbf{X}}) \bm{\Sigma}_{\mathrm{AS}}(\mathbf{I}_{m_{\mathrm{S}}m_{\mathrm{A}}}-\mathbf{R}_{\mathrm{S}}\widetilde{\mathbf{X}})\herm\big]+\tr\big[\mathbf{R}_{\mathrm{S}}(\bm{\Sigma}_{\mathrm{ARS}}+\sigma_{\mathrm{S}}^{2}\mathbf{I}_{m_{\mathrm{S}}m_{\mathrm{A}}})\mathbf{R}_{\mathrm{S}}\herm\big].\label{eq:true_MSE}
	\end{multline}
	\begin{multline}
		\mathsf{MSE}_{\mathrm{pred}}(\mathbf{F},\bm{\theta})
		= \|(\mathbf{I}_{m_{\mathrm{S}}m_{\mathrm{A}}}-\mathbf{R}_{\mathrm{A}}\widetilde{\mathbf{X}})(\bm{\mu}_{\mathrm{AS}}-\widehat{\bm{\mu}}_{\mathrm{A,AS}})\|^{2}\\
		+\tr\big[(\mathbf{I}_{m_{\mathrm{S}}m_{\mathrm{A}}}-\mathbf{R}_{\mathrm{A}}\widetilde{\mathbf{X}})\widehat{\bm{\Sigma}}_{\mathrm{A,AS}} (\mathbf{I}_{m_{\mathrm{S}}m_{\mathrm{A}}}-\mathbf{R}_{\mathrm{A}}\widetilde{\mathbf{X}})\herm\big] +\tr\big[\mathbf{R}_{\mathrm{A}}(\widehat{\bm{\Sigma}}_{\mathrm{A,ARS}}+\sigma_{\mathrm{S}}^{2}\mathbf{I}_{m_{\mathrm{S}}m_{\mathrm{A}}})\mathbf{R}_{\mathrm{A}}\herm\big].\label{eq:pred_MSE}
	\end{multline}
\hrulefill
\end{figure*}
where $\widehat{\bm{\Sigma}}_{\mathrm{A,ARS}}\triangleq\breve{\mathbf{X}}\widehat{\bm{\Sigma}}_{\mathrm{A,RS}}\breve{\mathbf{X}}\herm$, and the predicted LMMSE filter is defined as
\begin{equation}
	\mathbf{R}_{\mathrm{A}}=\widehat{\bm{\Sigma}}_{\mathrm{A,AS}}\widetilde{\mathbf{X}}\herm(\widetilde{\mathbf{X}}\widehat{\bm{\Sigma}}_{\mathrm{A,AS}}\widetilde{\mathbf{X}}\herm\!+\!\breve{\mathbf{X}}\widehat{\bm{\Sigma}}_{\mathrm{A,RS}}\breve{\mathbf{X}}\herm\!+\!\sigma_{\mathrm{S}}^{2}\mathbf{I}_{m_{\mathrm{S}}m_{\mathrm{A}}})^{-1}.\label{eq:LMMSE_filter_at_A}
\end{equation}
However, since A is unaware that its priors are erroneous, the bias term (i.e., $\|(\mathbf{I}_{m_{\mathrm{S}}m_{\mathrm{A}}}-\mathbf{R}_{\mathrm{A}}\widetilde{\mathbf{X}})(\bm{\mu}_{\mathrm{AS}}-\widehat{\bm{\mu}}_{\mathrm{A,AS}})\|^{2}$) is not accessible to A. Therefore, the predicted MSE available at A is given by
\begin{align}
	& \xi_{\mathrm{pred}}(\mathbf{F},\bm{\theta}) \notag \\
	= \ & \tr\big\{(\mathbf{I}_{m_{\mathrm{S}}m_{\mathrm{A}}}-\mathbf{R}_{\mathrm{A}}\widetilde{\mathbf{X}})\widehat{\bm{\Sigma}}_{\mathrm{A,AS}} (\mathbf{I}_{m_{\mathrm{S}}m_{\mathrm{A}}}-\mathbf{R}_{\mathrm{A}}\widetilde{\mathbf{X}})\herm\big\} \notag \\
	& \qquad \qquad \quad +\tr\big\{\mathbf{R}_{\mathrm{A}}(\widehat{\bm{\Sigma}}_{\mathrm{A,ARS}}+\sigma_{\mathrm{S}}^{2}\mathbf{I}_{m_{\mathrm{S}}m_{\mathrm{A}}})\mathbf{R}_{\mathrm{A}}\herm\big\}.\label{eq:unbiased_predicted_MSE}
\end{align}

\subsection{Problem Formulation for Transmitter Privacy}

We now formulate an optimization problem to jointly design the precoding matrix $\mathbf{F}$ and the passive beamforming vector $\bm{\theta}$ so as to maximize $\xi_{\mathrm{pred}}(\mathbf{F},\bm{\theta})$, while guaranteeing a communication QoS between A and B. The resulting optimization problem is given by
\begin{subequations}
	\label{eq:P0}
	\begin{align}
		\underset{\mathbf{F},\bm{\theta}}{\maximize}\  & \xi_{\mathrm{pred}}(\mathbf{F},\bm{\theta}),\label{eq:P0_obj}\\
		\st\  & C_{\mathrm{AB}}(\mathbf{F},\bm{\theta})\geq\mathscr{C}_{\mathrm{AB}},\label{eq:P0_rate_constraint}\\
		& \|\mathbf{F}\|_{\mathsf{F}}^{2}\leq p_{\mathrm{max}},\label{eq:P0_TPC}\\
		& |\theta_{\varkappa}|=1,\ \forall\varkappa\in\mathcal{M}_{\mathrm{R}}.\label{eq:P0_UMC}
	\end{align}
\end{subequations}
In~\eqref{eq:P0}, the objective is to maximize the predicted MSE at A. Constraint~\eqref{eq:P0_rate_constraint} ensures that the achievable rate at B is no smaller than the threshold $\mathscr{C}_{\mathrm{AB}}$, \eqref{eq:P0_TPC} imposes the maximum transmit-power budget $p_{\mathrm{max}}$, and \eqref{eq:P0_UMC} enforces the unit-modulus constraint on each RIS meta-atom. The formulation in~\eqref{eq:P0} captures the trade-off between reliable communication at B and transmitter privacy against S.

\begin{rem}
	It is worth emphasizing that, in~\eqref{eq:P0}, the objective is to maximize the predicted MSE at S based on the prior information available at A. Solving~\eqref{eq:P0} yields the \emph{optimal} transmit design $(\mathbf{F}_{\mathrm{opt}},\boldsymbol{\theta}_{\mathrm{opt}})$ to be employed by A. For performance evaluation, however, the \emph{true} MMSE achieved at S is computed using~\eqref{eq:true_MSE} for $(\mathbf{F}_{\mathrm{opt}},\boldsymbol{\theta}_{\mathrm{opt}})$.
	
	It is also important to note that the value of $\xi_{\mathrm{pred}}(\mathbf{F},\bm{\theta})$ can be very small, which may lead to numerical instability during the optimization process. To address this issue, we define the normalized objective
	\begin{equation}
		\bar{\xi}_{\mathrm{pred}}(\mathbf{F},\bm{\theta})=\frac{1}{\tr\big(\bm{\Sigma}_{\mathrm{AS}}\big)}\xi_{\mathrm{pred}}(\mathbf{F},\bm{\theta}).\label{eq:normalized_MSE_at_A}
	\end{equation}
	Using~\eqref{eq:normalized_MSE_at_A}, the optimization problem in~\eqref{eq:P0} can be equivalently reformulated as
	\begin{equation}
		\underset{\mathbf{F},\boldsymbol{\theta}}{\maximize}\big\{\bar{\xi}_{\mathrm{pred}}(\mathbf{F},\bm{\theta})\mid\eqref{eq:P0_rate_constraint},\eqref{eq:P0_TPC},\eqref{eq:P0_UMC}\big\}.\label{eq:P1}
	\end{equation}
	It is readily seen that~\eqref{eq:P1} is non-convex due to the objective function in~\eqref{eq:normalized_MSE_at_A}, the QoS constraint in~\eqref{eq:P0_rate_constraint}, and the unit-modulus constraint in~\eqref{eq:P0_UMC}. Moreover, the coupling between the optimization variables $\mathbf{F}$ and $\boldsymbol{\theta}$, together with the equality constraint in~\eqref{eq:P0_UMC}, makes the problem particularly challenging to solve. In addition, developing a low-complexity solution to~\eqref{eq:P1} is crucial, since the computational burden can become prohibitive when the number of RIS meta-atoms is large. Motivated by these challenges, the next section develops a numerically efficient solution for~\eqref{eq:P1}.
\end{rem}

\section{AO-Based Proposed Solution\label{sec:AO-Based-Proposed-Solution}}

Due to the non-convexity of the rate constraint in~\eqref{eq:P0_rate_constraint}, directly handling~\eqref{eq:P1} is challenging. To facilitate the subsequent algorithm design, we first reformulate this constraint. Specifically, for a non-negative real variable $\tau$, \eqref{eq:P0_rate_constraint} can be equivalently written as
\begin{equation}
	\underbrace{1+\tau-\frac{1}{\mathscr{C}_{\mathrm{AB}}}C_{\mathrm{AB}}(\mathbf{F},\boldsymbol{\theta})}_{\triangleq f(\mathbf{F},\boldsymbol{\theta},\mathscr{C}_{\mathrm{AB}},\tau)}=0.\label{eq:f_definition}
\end{equation}
Then, following the primal-dual decomposition (PDD) framework in~\cite{20-TSP-PDD}, the augmented Lagrangian associated with~\eqref{eq:P1} can be expressed as
\begin{multline}
	g_{\nu,\rho}(\mathbf{F},\boldsymbol{\theta},\tau)=\bar{\xi}_{\mathrm{pred}}(\mathbf{F},\bm{\theta})-\nu f(\mathbf{F},\boldsymbol{\theta},\mathscr{C}_{\mathrm{AB}},\tau)\\
	-\frac{1}{2\rho}f^{2}(\mathbf{F},\boldsymbol{\theta},\mathscr{C}_{\mathrm{AB}},\tau),\label{eq:augmented_objective}
\end{multline}
where $\nu$ is the Lagrange multiplier and $\rho$ is the penalty parameter. Using~\eqref{eq:augmented_objective}, the problem in~\eqref{eq:P1} can be transformed into
\begin{equation}
	\underset{\mathbf{F},\boldsymbol{\theta},\tau}{\maximize}\ \big\{g_{\nu,\rho}(\mathbf{F},\boldsymbol{\theta},\tau)\mid\tau\in\mathbb{R}_{+},\eqref{eq:P0_TPC},\eqref{eq:P0_UMC}\big\}.\label{eq:P2}
\end{equation}
It can be observed that the constraints in~\eqref{eq:P2} are now decoupled, whereas the design variables remain coupled in the objective function. Motivated by this structure, we adopt an AO-based scheme to obtain a stationary solution. In particular, we employ an alternating projected gradient ascent method to develop a low-complexity solution to~\eqref{eq:P2}. Before presenting the proposed algorithm, we first derive the complex-valued gradients of $g_{\nu,\rho}(\mathbf{F},\boldsymbol{\theta},\tau)$ with respect to $\mathbf{F}_{\mathrm{c}}$, $\mathbf{F}_{\mathrm{s}}$, and $\boldsymbol{\theta}$ in the following theorems. \setcounter{thm}{0}

\begin{thm}
	\label{thm:grad_Fc}
	A closed-form expression for $\nabla_{\mathbf{F}}g_{\nu,\rho}(\mathbf{F},\boldsymbol{\theta},\tau)$ is given by $\big[\nabla_{\mathbf{F}_{\mathrm{c}}}g_{\nu,\rho}(\mathbf{F},\boldsymbol{\theta},\tau),\nabla_{\mathbf{F}_{\mathrm{s}}}g_{\nu,\rho}(\mathbf{F},\boldsymbol{\theta},\tau)\big]$, where $\nabla_{\mathbf{F}_{\mathrm{c}}}g_{\nu,\rho}(\mathbf{F},\boldsymbol{\theta},\tau)$ and $\nabla_{\mathbf{F}_{\mathrm{s}}}g_{\nu,\rho}(\mathbf{F},\boldsymbol{\theta},\tau)$ are respectively given by~\eqref{eq:grad_Fc_Closed} and~\eqref{eq:grad_Fs_Closed}, shown on the next page.
\end{thm}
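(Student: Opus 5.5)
We will obtain the two blocks of $\nabla_{\mathbf{F}}g_{\nu,\rho}$ from the complex differential of $g_{\nu,\rho}$, reading off the gradient through the identification $\op{d}g=2\Re\{\tr(\mathbf{G}\herm\op{d}\mathbf{F}_{\mathrm{c}})\}\Rightarrow\nabla_{\mathbf{F}_{\mathrm{c}}}g=\mathbf{G}$, consistent with the notation section, and similarly for $\mathbf{F}_{\mathrm{s}}$.

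\textbf{Step 1: split the Lagrangian.} By~\eqref{eq:augmented_objective},
\begin{equation*}
\nabla g_{\nu,\rho}=\nabla\bar{\xi}_{\mathrm{pred}}-\big(\nu+\tfrac{1}{\rho}f\big)\nabla f,
\qquad
\nabla f=-\tfrac{1}{\mathscr{C}_{\mathrm{AB}}}\nabla C_{\mathrm{AB}},
\end{equation*}
since $\tau$ does not depend on $\mathbf{F}$. Hence only two gradients are needed, namely those of $C_{\mathrm{AB}}$ and of $\xi_{\mathrm{pred}}$ (the latter scaled by $1/\tr(\bm{\Sigma}_{\mathrm{AS}})$). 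Throughout, the symbol blocks $\mathbf{W}_{\mathrm{c}}$ and $\mathbf{W}_{\mathrm{s}}$ are held fixed, so that $\mathbf{X}=\mathbf{F}_{\mathrm{c}}\mathbf{W}_{\mathrm{c}}+\mathbf{F}_{\mathrm{s}}\mathbf{W}_{\mathrm{s}}$ is a deterministic linear function of $\mathbf{F}$ with $\op{d}\mathbf{X}=\op{d}\mathbf{F}_{\mathrm{c}}\mathbf{W}_{\mathrm{c}}+\op{d}\mathbf{F}_{\mathrm{s}}\mathbf{W}_{\mathrm{s}}$.

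\textbf{Step 2: the rate term.} Write
\begin{equation*}
C_{\mathrm{AB}}=\ln\det(\mathbf{Q}_{\mathrm{AB}}+\widehat{\mathbf{Z}}_{\mathrm{AB}}\mathbf{F}_{\mathrm{c}}\mathbf{F}_{\mathrm{c}}\herm\widehat{\mathbf{Z}}_{\mathrm{AB}}\herm)-\ln\det\mathbf{Q}_{\mathrm{AB}},
\end{equation*}
where $\mathbf{Q}_{\mathrm{AB}}$ is taken from~\eqref{eq:Q_AB_ClosedForm}. Using $\op{d}\ln\det\mathbf{A}=\tr(\mathbf{A}^{-1}\op{d}\mathbf{A})$ and noting that the scalar bracket in $\mathbf{Q}_{\mathrm{AB}}$ has differential $2\Re\,\tr\big[(\varsigma_{\mathrm{AB}}^2\mathbf{I}+\varsigma_{\mathrm{RB}}^2\mathbf{H}_{\mathrm{AR}}\herm\mathbf{H}_{\mathrm{AR}})\mathbf{F}_{i}\op{d}\mathbf{F}_{i}\herm\big]$ for $i\in\{\mathrm{c},\mathrm{s}\}$, we obtain closed forms of the type
\begin{equation*}
\widehat{\mathbf{Z}}_{\mathrm{AB}}\herm(\mathbf{Q}_{\mathrm{AB}}+\widehat{\mathbf{Z}}_{\mathrm{AB}}\mathbf{F}_{\mathrm{c}}\mathbf{F}_{\mathrm{c}}\herm\widehat{\mathbf{Z}}_{\mathrm{AB}}\herm)^{-1}\widehat{\mathbf{Z}}_{\mathrm{AB}}\mathbf{F}_{\mathrm{c}}
\end{equation*}
plus trace-weighted terms $\big[\tr(\cdot^{-1})-\tr(\mathbf{Q}_{\mathrm{AB}}^{-1})\big](\varsigma_{\mathrm{AB}}^2\mathbf{I}+\varsigma_{\mathrm{RB}}^2\mathbf{H}_{\mathrm{AR}}\herm\mathbf{H}_{\mathrm{AR}})\mathbf{F}_{i}$. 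For $\mathbf{F}_{\mathrm{s}}$ there is an additional difference-of-inverses term sandwiched by $\widehat{\mathbf{Z}}_{\mathrm{AB}}\herm(\cdot)\widehat{\mathbf{Z}}_{\mathrm{AB}}\mathbf{F}_{\mathrm{s}}$. This part is routine.

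\textbf{Step 3: the MSE term (main obstacle).} $\xi_{\mathrm{pred}}$ depends on $\mathbf{F}$ only through $\widetilde{\mathbf{X}}=\mathbf{X}\trans\otimes\mathbf{I}$ and $\breve{\mathbf{X}}=(\mathbf{X}\trans\mathbf{H}_{\mathrm{AR}}\trans\bm{\Theta})\otimes\mathbf{I}$, including via $\mathbf{R}_{\mathrm{A}}$. Since $\mathbf{R}_{\mathrm{A}}$ is the minimiser of the quadratic form in~\eqref{eq:pred_MSE} (its mean-free part), the envelope-type simplification applies: $\xi_{\mathrm{pred}}=\tr(\widehat{\bm{\Sigma}}_{\mathrm{A,AS}}-\mathbf{R}_{\mathrm{A}}\widetilde{\mathbf{X}}\widehat{\bm{\Sigma}}_{\mathrm{A,AS}})$. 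Equivalently, by the matrix inversion lemma, $\xi_{\mathrm{pred}}=\tr\big[(\widehat{\bm{\Sigma}}_{\mathrm{A,AS}}^{-1}+\widetilde{\mathbf{X}}\herm\mathbf{C}^{-1}\widetilde{\mathbf{X}})^{-1}\big]$ with $\mathbf{C}=\breve{\mathbf{X}}\widehat{\bm{\Sigma}}_{\mathrm{A,RS}}\breve{\mathbf{X}}\herm+\sigma_{\mathrm{S}}^2\mathbf{I}$. We will differentiate this form with $\op{d}\mathbf{A}^{-1}=-\mathbf{A}^{-1}\op{d}\mathbf{A}\,\mathbf{A}^{-1}$, obtaining $\op{d}\xi_{\mathrm{pred}}$ as $2\Re\,\tr[\mathbf{G}_1\herm\op{d}\widetilde{\mathbf{X}}]+2\Re\,\tr[\mathbf{G}_2\herm\op{d}\breve{\mathbf{X}}]$ with explicit $\mathbf{G}_1,\mathbf{G}_2$ built from $\mathbf{P}=(\widehat{\bm{\Sigma}}_{\mathrm{A,AS}}^{-1}+\widetilde{\mathbf{X}}\herm\mathbf{C}^{-1}\widetilde{\mathbf{X}})^{-1}$ and $\mathbf{C}^{-1}$. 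The delicate part is pulling these Kronecker-structured differentials back to $\op{d}\mathbf{X}$. For this we will use $\tr[\mathbf{G}\herm(\op{d}\mathbf{X}\trans\otimes\mathbf{I}_{m_{\mathrm{S}}})]=\tr[\mathbf{T}(\mathbf{G})\herm\op{d}\mathbf{X}\trans]$, where $\mathbf{T}(\mathbf{G})$ is the $K\times m_{\mathrm{A}}$ matrix of block-wise partial traces of $\mathbf{G}$ (the $m_{\mathrm{S}}\times m_{\mathrm{S}}$ blocks traced). This yields
\begin{equation*}
\nabla_{\mathbf{X}}\xi_{\mathrm{pred}}=\mathbf{T}(\mathbf{G}_1)\trans+\bm{\Theta}\conj\mathbf{H}_{\mathrm{AR}}\conj{}\trans\,\mathbf{T}(\mathbf{G}_2)\trans\ \text{(suitably conjugated)},
\end{equation*}
and finally $\nabla_{\mathbf{F}_{\mathrm{c}}}\xi_{\mathrm{pred}}=\nabla_{\mathbf{X}}\xi_{\mathrm{pred}}\,\mathbf{W}_{\mathrm{c}}\herm$ and $\nabla_{\mathbf{F}_{\mathrm{s}}}\xi_{\mathrm{pred}}=\nabla_{\mathbf{X}}\xi_{\mathrm{pred}}\,\mathbf{W}_{\mathrm{s}}\herm$. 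We expect the careful bookkeeping of transposes, conjugates and the partial-trace operator (expressible with $\vecop/\unvec$) to be the main source of difficulty. We will sanity-check the result against finite differences and against the scalar case $m_{\mathrm{S}}=1$.

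\textbf{Step 4: assemble.} Combining Steps 1--3 gives~\eqref{eq:grad_Fc_Closed} and~\eqref{eq:grad_Fs_Closed}, and stacking the two blocks gives $\nabla_{\mathbf{F}}g_{\nu,\rho}$. The stacking is valid because $\mathbf{F}=[\mathbf{F}_{\mathrm{c}},\mathbf{F}_{\mathrm{s}}]$ is a column partition, so $\op{d}g$ separates additively over the two blocks.
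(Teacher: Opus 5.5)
Your plan is sound, and for the sensing term it takes a genuinely different route from the paper. Your treatment of the rate term via $\ln\det\mathbf{E}_{\mathrm{AB}}-\ln\det\mathbf{Q}_{\mathrm{AB}}$ is essentially the paper's.

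For $\xi_{\mathrm{pred}}$, the paper differentiates both traces directly. It writes $\mathbf{R}_{\mathrm{A}}=\mathbf{R}_{\mathrm{A}1}\mathbf{R}_{\mathrm{A}2}^{-1}$ and propagates $\op{d}\mathbf{R}_{\mathrm{A}}$, which generates the chains $\mathbf{M}_1,\ldots,\mathbf{M}_{5\mathrm{c}}$ and $\mathbf{U}_1,\ldots,\mathbf{U}_5$. It then maps $\op{d}(\mathbf{F}_{\mathrm{c}}\conj\otimes\mathbf{I}_{m_{\mathrm{S}}})$ back to $\op{d}\vecop(\mathbf{F}_{\mathrm{c}}\conj)$ through the commutation-matrix factor $\mathbf{I}_{m_{\mathrm{A}}}\otimes\mathbf{G}_{\mathrm{c}}$, and repeats everything for $\mathbf{F}_{\mathrm{s}}$ with $\mathbf{W}_{\mathrm{s}}\conj$.

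You instead use the fact that $\mathbf{R}_{\mathrm{A}}$ in~\eqref{eq:LMMSE_filter_at_A} is built from exactly the $\widehat{\bm{\Sigma}}_{\mathrm{A,AS}}$, $\widehat{\bm{\Sigma}}_{\mathrm{A,RS}}$ and $\breve{\mathbf{X}}$ that weight~\eqref{eq:unbiased_predicted_MSE}. It therefore minimises that quadratic, and $\xi_{\mathrm{pred}}=\tr(\widehat{\bm{\Sigma}}_{\mathrm{A,AS}}-\mathbf{R}_{\mathrm{A}}\widetilde{\mathbf{X}}\widehat{\bm{\Sigma}}_{\mathrm{A,AS}})$ holds identically in $\mathbf{F}$. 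Everything the paper routes through $\op{d}\mathbf{R}_{\mathrm{A}}$ sums to zero, and you get short factors: $\mathbf{G}_1=-\mathbf{C}^{-1}\widetilde{\mathbf{X}}\mathbf{P}^{2}$ and $\mathbf{G}_2=\mathbf{C}^{-1}\widetilde{\mathbf{X}}\mathbf{P}^{2}\widetilde{\mathbf{X}}\herm\mathbf{C}^{-1}\breve{\mathbf{X}}\widehat{\bm{\Sigma}}_{\mathrm{A,RS}}$. Your block partial trace performs the same operation as $\vecop\trans(\cdot)(\mathbf{I}_{m_{\mathrm{A}}}\otimes\mathbf{G}_{\mathrm{c}})$ without forming commutation matrices. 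The chain rule through $\mathbf{X}$ also handles both blocks at once.

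What the paper's brute-force route buys is that it never uses optimality of the gain. It would therefore carry over to a mismatched objective such as~\eqref{eq:true_MSE}, where $\mathbf{R}_{\mathrm{S}}$ is not the minimiser and your shortcut is unavailable.

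Two small fixes are needed:
\begin{itemize}
\item The RIS term must read $\mathbf{H}_{\mathrm{AR}}\herm\bm{\Theta}\conj\mathbf{T}(\mathbf{G}_2)\trans$; your ordering $\bm{\Theta}\conj\mathbf{H}_{\mathrm{AR}}\herm$ is not dimensionally consistent.
\item The matrix-inversion-lemma form needs $\widehat{\bm{\Sigma}}_{\mathrm{A,AS}}\succ\mathbf{0}$. If that fails, differentiate the first form instead.
\end{itemize}

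Your Step~4 does overclaim, though: the steps do not literally produce \eqref{eq:grad_Fc_Closed} and \eqref{eq:grad_Fs_Closed}.

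For the sensing part, you obtain a different expression. It equals the $\mathbf{M}_{\mathrm{c}}+\mathbf{U}_{\mathrm{c}}$ form only by uniqueness of the gradient, and you should say so explicitly.

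For the rate part of $\nabla_{\mathbf{F}_{\mathrm{s}}}$, your Step~2 gives $\nabla_{\mathbf{F}_{\mathrm{s}}}f=\frac{1}{\mathscr{C}_{\mathrm{AB}}}\big[\widehat{\mathbf{Z}}_{\mathrm{AB}}\herm\mathbf{D}_{\mathrm{AB}}\widehat{\mathbf{Z}}_{\mathrm{AB}}+\tr(\mathbf{D}_{\mathrm{AB}})(\varsigma_{\mathrm{AB}}^{2}\mathbf{I}_{m_{\mathrm{A}}}+\varsigma_{\mathrm{RB}}^{2}\mathbf{H}_{\mathrm{AR}}\herm\mathbf{H}_{\mathrm{AR}})\big]\mathbf{F}_{\mathrm{s}}$. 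By contrast,~\eqref{eq:A-9}, and hence~\eqref{eq:grad_Fs_Closed}, has $\tr(\mathbf{D}_{\mathrm{AB}})\widehat{\mathbf{Z}}_{\mathrm{AB}}\herm\widehat{\mathbf{Z}}_{\mathrm{AB}}$ in place of $\widehat{\mathbf{Z}}_{\mathrm{AB}}\herm\mathbf{D}_{\mathrm{AB}}\widehat{\mathbf{Z}}_{\mathrm{AB}}$.

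Yours is the correct one. The term $\tr(\mathbf{D}_{\mathrm{AB}}\widehat{\mathbf{Z}}_{\mathrm{AB}}\mathbf{F}_{\mathrm{s}}\op{d}\mathbf{F}_{\mathrm{s}}\herm\widehat{\mathbf{Z}}_{\mathrm{AB}}\herm)$ does not factor through $\tr(\mathbf{D}_{\mathrm{AB}})$, because that part of $\mathbf{Q}_{\mathrm{AB}}$ is not a multiple of the identity.

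A concrete check: take $m_{\mathrm{A}}=m_{\mathrm{B}}=2$, $\widehat{\mathbf{Z}}_{\mathrm{AB}}=\mathbf{I}_2$, $\varsigma_{\mathrm{AB}}=\varsigma_{\mathrm{RB}}=0$, $\sigma_{\mathrm{B}}^{2}=1$, $\mathbf{F}_{\mathrm{c}}=\diag(1,0)$ and $\mathbf{F}_{\mathrm{s}}=\diag(s_1,s_2)$.
\begin{itemize}
\item Then $C_{\mathrm{AB}}=\ln(2+|s_1|^{2})-\ln(1+|s_1|^{2})$, which does not depend on $s_2$.
\item At $s_1=s_2=1$ one has $\mathbf{D}_{\mathrm{AB}}=\diag(1/6,0)$.
\item The printed formula then has $(2,2)$ entry $\frac{1}{6\mathscr{C}_{\mathrm{AB}}}\neq0$, while yours correctly gives $0$.
\end{itemize}

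The two expressions agree when $m_{\mathrm{A}}=1$ or $m_{\mathrm{B}}=1$, but not in general. You should state your expression as the result rather than assert agreement with~\eqref{eq:grad_Fs_Closed}.
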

\begin{IEEEproof}
	See Appendix~\ref{sec:proof_grad_Fc}.
\end{IEEEproof}

\begin{figure*}[tbh]
	\begin{multline}
		\nabla_{\mathbf{F}_{\mathrm{c}}}g_{\nu,\rho}(\mathbf{F},\boldsymbol{\theta},\tau)=\frac{1}{\tr\big(\bm{\Sigma}_{\mathrm{AS}}\big)}\unvec_{m_{\mathrm{A}}\times m_{\min}}\big(\big[\vecop\trans\big(\mathbf{M}_{\mathrm{c}}\trans+\mathbf{U}_{\mathrm{c}}\trans\big)\big(\mathbf{I}_{m_{\mathrm{A}}}\otimes\mathbf{G}\big)\big]\trans\big)\\
		-\frac{1}{\mathscr{C}_{\mathrm{AB}}}\big\{\nu+\tfrac{1}{\rho}f(\mathbf{F},\boldsymbol{\theta},\mathscr{C}_{\mathrm{AB}},\tau)\big\}\big[\tr\big(\mathbf{D}_{\mathrm{AB}}\big)\big(\varsigma_{\mathrm{AB}}^{2}\mathbf{I}_{m_{\mathrm{A}}}+\varsigma_{\mathrm{RB}}^{2}\mathbf{H}_{\mathrm{AR}}\herm\mathbf{H}_{\mathrm{AR}}\big)-\widehat{\mathbf{Z}}_{\mathrm{AB}}\herm\mathbf{E}_{\mathrm{AB}}^{-1}\widehat{\mathbf{Z}}_{\mathrm{AB}}\big]\mathbf{F}_{\mathrm{c}}.\label{eq:grad_Fc_Closed}
	\end{multline}
	\begin{multline}
		\nabla_{\mathbf{F}_{\mathrm{s}}}g_{\nu,\rho}(\mathbf{F},\boldsymbol{\theta},\tau)=\frac{1}{\tr\big(\bm{\Sigma}_{\mathrm{AS}}\big)}\unvec_{m_{\mathrm{A}}\times m_{\mathrm{A}}}\big(\big[\vecop\trans\big(\mathbf{M}_{\mathrm{s}}\trans+\mathbf{U}_{\mathrm{s}}\trans\big)\big(\mathbf{I}_{m_{\mathrm{A}}}\otimes\mathbf{G}_{\mathrm{s}}\big)\big]\trans\big)\\
		-\frac{1}{\mathscr{C}_{\mathrm{AB}}}\big\{\nu+\tfrac{1}{\rho}f(\mathbf{F},\boldsymbol{\theta},\mathscr{C}_{\mathrm{AB}},\tau)\big\}\tr\big(\mathbf{D}_{\mathrm{AB}}\big)\big(\widehat{\mathbf{Z}}_{\mathrm{AB}}\herm\widehat{\mathbf{Z}}_{\mathrm{AB}}+\varsigma_{\mathrm{AB}}^{2}\mathbf{I}_{m_{\mathrm{A}}}+\varsigma_{\mathrm{RB}}^{2}\mathbf{H}_{\mathrm{AR}}\herm\mathbf{H}_{\mathrm{AR}}\big)\mathbf{F}_{\mathrm{s}}.\label{eq:grad_Fs_Closed}
	\end{multline}
	\rule{1\textwidth}{1pt}
\end{figure*}

\begin{thm}
	\label{thm:grad_theta}
	A closed-form expression for $\nabla_{\bm{\theta}}g_{\nu,\rho}(\mathbf{F},\boldsymbol{\theta},\tau)$ is given by~\eqref{eq:grad_theta_Closed}, shown on the next page.
\end{thm}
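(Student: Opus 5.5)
We will obtain $\nabla_{\bm{\theta}}g_{\nu,\rho}$ by taking complex differentials, treating $\bm{\theta}$ and $\bm{\theta}\conj$ as independent and reading off the coefficient of $\op{d}\bm{\theta}\conj$, in the same way as in Theorem~\ref{thm:grad_Fc}. By~\eqref{eq:augmented_objective}, and since $\tau$ does not depend on $\bm{\theta}$,
\[
\nabla_{\bm{\theta}}g_{\nu,\rho}=\nabla_{\bm{\theta}}\bar{\xi}_{\mathrm{pred}}+\frac{1}{\mathscr{C}_{\mathrm{AB}}}\big\{\nu+\tfrac{1}{\rho}f(\mathbf{F},\boldsymbol{\theta},\mathscr{C}_{\mathrm{AB}},\tau)\big\}\nabla_{\bm{\theta}}C_{\mathrm{AB}}(\mathbf{F},\boldsymbol{\theta}).
\]
We therefore handle the two terms separately.

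\emph{Rate term.} The variable $\bm{\theta}$ enters $C_{\mathrm{AB}}$ only through $\widehat{\mathbf{Z}}_{\mathrm{AB}}=\widehat{\mathbf{H}}_{\mathrm{AB}}+\widehat{\mathbf{H}}_{\mathrm{RB}}\boldsymbol{\Theta}\mathbf{H}_{\mathrm{AR}}$. The scalar part of $\mathbf{Q}_{\mathrm{AB}}$ in~\eqref{eq:Q_AB_ClosedForm} does not depend on $\bm{\theta}$, because the $\varsigma_{\mathrm{RB}}^{2}$ term involves only $\mathbf{H}_{\mathrm{AR}}$. We write $C_{\mathrm{AB}}=\ln\det(\mathbf{E}_{\mathrm{AB}})-\ln\det(\mathbf{Q}_{\mathrm{AB}})$, where $\mathbf{E}_{\mathrm{AB}}=\mathbf{Q}_{\mathrm{AB}}+\widehat{\mathbf{Z}}_{\mathrm{AB}}\mathbf{F}_{\mathrm{c}}\mathbf{F}_{\mathrm{c}}\herm\widehat{\mathbf{Z}}_{\mathrm{AB}}\herm$, and use $\op{d}\ln\det\mathbf{A}=\tr(\mathbf{A}^{-1}\op{d}\mathbf{A})$. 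This gives $\partial C_{\mathrm{AB}}/\partial\widehat{\mathbf{Z}}_{\mathrm{AB}}\conj=(\mathbf{E}_{\mathrm{AB}}^{-1}-\mathbf{Q}_{\mathrm{AB}}^{-1})\widehat{\mathbf{Z}}_{\mathrm{AB}}\mathbf{F}_{\mathrm{s}}\mathbf{F}_{\mathrm{s}}\herm+\mathbf{E}_{\mathrm{AB}}^{-1}\widehat{\mathbf{Z}}_{\mathrm{AB}}\mathbf{F}_{\mathrm{c}}\mathbf{F}_{\mathrm{c}}\herm$. The chain rule through $\op{d}\widehat{\mathbf{Z}}_{\mathrm{AB}}\conj=\widehat{\mathbf{H}}_{\mathrm{RB}}\conj\diag(\op{d}\bm{\theta}\conj)\mathbf{H}_{\mathrm{AR}}\conj$, together with the identity $\tr(\mathbf{A}\diag(\mathbf{x})\mathbf{B})=\mathbf{x}\trans\vecd(\mathbf{B}\mathbf{A})$, then yields $\nabla_{\bm{\theta}}C_{\mathrm{AB}}=\vecd\big(\widehat{\mathbf{H}}_{\mathrm{RB}}\herm[\partial C_{\mathrm{AB}}/\partial\widehat{\mathbf{Z}}_{\mathrm{AB}}\conj]\mathbf{H}_{\mathrm{AR}}\herm\big)$, which reuses the matrices $\mathbf{D}_{\mathrm{AB}},\mathbf{E}_{\mathrm{AB}}$ already appearing in Theorem~\ref{thm:grad_Fc}.

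\emph{Privacy term.} In~\eqref{eq:unbiased_predicted_MSE}, $\bm{\theta}$ enters only through $\breve{\mathbf{X}}=(\mathbf{X}\trans\mathbf{H}_{\mathrm{AR}}\trans\bm{\Theta})\otimes\mathbf{I}_{m_{\mathrm{S}}}$, both inside $\mathbf{R}_{\mathrm{A}}$ and in $\widehat{\bm{\Sigma}}_{\mathrm{A,ARS}}$. We will first simplify $\xi_{\mathrm{pred}}$. Since $\mathbf{R}_{\mathrm{A}}$ in~\eqref{eq:LMMSE_filter_at_A} is the matched LMMSE filter for A's own prior, the standard identity gives
\[
\xi_{\mathrm{pred}}=\tr\big(\widehat{\bm{\Sigma}}_{\mathrm{A,AS}}-\widehat{\bm{\Sigma}}_{\mathrm{A,AS}}\widetilde{\mathbf{X}}\herm\mathbf{C}^{-1}\widetilde{\mathbf{X}}\widehat{\bm{\Sigma}}_{\mathrm{A,AS}}\big),
\]
where $\mathbf{C}=\widetilde{\mathbf{X}}\widehat{\bm{\Sigma}}_{\mathrm{A,AS}}\widetilde{\mathbf{X}}\herm+\breve{\mathbf{X}}\widehat{\bm{\Sigma}}_{\mathrm{A,RS}}\breve{\mathbf{X}}\herm+\sigma_{\mathrm{S}}^{2}\mathbf{I}$. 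The variable $\bm{\theta}$ then appears only in $\mathbf{C}$, and $\op{d}\xi_{\mathrm{pred}}=\tr(\mathbf{G}_{\theta}\,\op{d}\mathbf{C})$ with $\mathbf{G}_{\theta}=\mathbf{C}^{-1}\widetilde{\mathbf{X}}\widehat{\bm{\Sigma}}_{\mathrm{A,AS}}^{2}\widetilde{\mathbf{X}}\herm\mathbf{C}^{-1}$. Next we use $\op{d}\breve{\mathbf{X}}\conj=(\mathbf{X}\herm\mathbf{H}_{\mathrm{AR}}\herm\diag(\op{d}\bm{\theta}\conj))\otimes\mathbf{I}_{m_{\mathrm{S}}}$ and collect the coefficient of $\op{d}\breve{\mathbf{X}}\conj$, namely $\mathbf{G}_{\theta}\breve{\mathbf{X}}\widehat{\bm{\Sigma}}_{\mathrm{A,RS}}$. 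To remove the Kronecker structure, we partition this $m_{\mathrm{S}}K\times m_{\mathrm{S}}m_{\mathrm{R}}$ matrix into $m_{\mathrm{S}}\times m_{\mathrm{S}}$ blocks and take the partial (block) trace over the $\mathbf{I}_{m_{\mathrm{S}}}$ factor, which gives a $K\times m_{\mathrm{R}}$ matrix $\mathbf{T}$. The gradient is then $\frac{1}{\tr(\bm{\Sigma}_{\mathrm{AS}})}\vecd\big(\mathbf{H}_{\mathrm{AR}}\conj\mathbf{X}\conj\mathbf{T}\big)$, with conjugation and transposition fixed by careful bookkeeping.

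The main obstacle is this privacy term, specifically the Kronecker-structured chain rule from $\op{d}\mathbf{C}$ down to $\op{d}\bm{\theta}\conj$ while keeping the conjugation conventions consistent. We will first justify the matched-filter simplification to keep the algebra short. If that simplification is not used, the same computation can be carried out on the two-trace form, but it is much longer. As a final step, we will verify the resulting expression numerically against finite differences on $\Re\{\bm{\theta}\}$ and $\Im\{\bm{\theta}\}$.
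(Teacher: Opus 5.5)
Your proposal is correct. For the rate term it matches the paper, but for the privacy term it takes a different route.

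\textbf{Rate term.} Your $\partial C_{\mathrm{AB}}/\partial\widehat{\mathbf{Z}}_{\mathrm{AB}}\conj$ is exactly $-\mathbf{L}_{\mathrm{AB}}$. Your rate contribution is therefore the paper's $-\frac{1}{\mathscr{C}_{\mathrm{AB}}}\{\nu+\tfrac{1}{\rho}f\}\vecd(\widehat{\mathbf{H}}_{\mathrm{RB}}\herm\mathbf{L}_{\mathrm{AB}}\mathbf{H}_{\mathrm{AR}}\herm)$.

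\textbf{Privacy term.} The paper differentiates the two-trace form of $\bar{\xi}_{\mathrm{pred}}$ term by term via $\mathbf{R}_{\mathrm{A}}=\mathbf{R}_{\mathrm{A}1}\mathbf{R}_{\mathrm{A}2}^{-1}$. It carries the auxiliary matrices $\mathbf{V}_1,\mathbf{V}_2,\mathbf{V}_3$ and $\mathbf{J}_1,\ldots,\mathbf{J}_4$, and then removes the Kronecker structure with commutation matrices (its $\mathbf{G}_\theta$), $\vecop$ and $\unvec_{m_{\mathrm{R}}\times m_{\mathrm{R}}}$ before applying $\vecd$. You instead use that $\mathbf{R}_{\mathrm{A}}$ is the matched LMMSE filter for A's own prior, since the same $\breve{\mathbf{X}}$ appears in the filter and in $\widehat{\bm{\Sigma}}_{\mathrm{A,ARS}}$. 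Then $\bm{\theta}$ enters only through $\mathbf{C}$, and your kernel $\mathbf{C}^{-1}\widetilde{\mathbf{X}}\widehat{\bm{\Sigma}}_{\mathrm{A,AS}}^{2}\widetilde{\mathbf{X}}\herm\mathbf{C}^{-1}$ is simply $\mathbf{R}_{\mathrm{A}}\herm\mathbf{R}_{\mathrm{A}}=\mathbf{U}_2$.

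\textbf{Why the two results coincide.} At the matched filter, $\widetilde{\mathbf{X}}\mathbf{M}_1=\mathbf{U}_1$, so $\mathbf{V}_1=\mathbf{J}_1$. The paper's contributions through $\op{d}\mathbf{R}_{\mathrm{A}2}$ therefore cancel, leaving $\mathbf{V}_3+\mathbf{J}_4=[(\mathbf{H}_{\mathrm{AR}}\conj\mathbf{X}\conj)\otimes\mathbf{I}_{m_{\mathrm{S}}}]\mathbf{U}_2\breve{\mathbf{X}}\widehat{\bm{\Sigma}}_{\mathrm{A,RS}}$. The paper's commutation-matrix/$\unvec$/$\vecd$ step is then exactly your block partial trace followed by $\vecd(\mathbf{H}_{\mathrm{AR}}\conj\mathbf{X}\conj\mathbf{T})$.

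\textbf{What each route buys.} Your route is shorter and gives a more compact gradient. It also makes the paper's claimed linear-in-$m_{\mathrm{R}}$ cost of the $\bm{\theta}$-gradient evident. The paper's route never uses the optimality of $\mathbf{R}_{\mathrm{A}}$. The same template therefore extends to objectives built with a mismatched filter, where the $\op{d}\mathbf{R}$ terms do not cancel.

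To state the result literally in the form \eqref{eq:grad_theta_Closed}, add the identification described above. Also rename your $\mathbf{G}_\theta$, since it clashes with the paper's matrix of the same name.
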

\begin{IEEEproof}
	See Appendix~\ref{sec:proof_grad_theta}.
\end{IEEEproof}

\begin{figure*}[tbh]
	\begin{multline}
		\nabla_{\bm{\theta}}g_{\nu,\rho}(\mathbf{F},\boldsymbol{\theta},\tau)=\vecd\Big[\frac{1}{\tr\big(\bm{\Sigma}_{\mathrm{AS}}\big)}\unvec_{m_{\mathrm{R}}\times m_{\mathrm{R}}}\big(\big[\vecop\trans\big(\mathbf{V}_{3}\trans+\mathbf{J}_{4}\trans\big)\big(\mathbf{I}_{m_{\mathrm{R}}}\otimes\mathbf{G}_{\theta}\big)\big]\trans\big)\Big]\\
		-\frac{1}{\mathscr{C}_{\mathrm{AB}}}\big\{\nu+\tfrac{1}{\rho}f(\mathbf{F},\boldsymbol{\theta},\mathscr{C}_{\mathrm{AB}},\tau)\big\}\vecd\big(\widehat{\mathbf{H}}_{\mathrm{RB}}\herm\mathbf{L}_{\mathrm{AB}}\mathbf{H}_{\mathrm{AR}}\herm\big).\label{eq:grad_theta_Closed}
	\end{multline}
	\rule{1\textwidth}{1pt}
\end{figure*}

\begin{algorithm}[t]
	\caption{AO-Based Proposed Algorithm to Solve~\eqref{eq:P2}}
	\label{algorithm-1}
	\KwIn{$\mathbf{F}^{(0)}$, $\boldsymbol{\theta}^{(0)}$, $\tau^{(0)}$, $\mu_{\mathbf{F}}$, $\mu_{\boldsymbol{\theta}}$, $\nu$, $\rho$, $\kappa$}
	\KwOut{$\mathbf{F}_{\mathsf{opt}}$, $\boldsymbol{\theta}_{\mathsf{opt}}$}
	$\jmath\leftarrow1$\;
	\Repeat{convergence}{
		\tcc{\textcolor{blue}{Update $(\mathbf{F},\bm{\theta},\tau)$ for fixed $(\nu,\rho)$}}
		Obtain $\big(\mathbf{F}^{(\jmath+1)},\bm{\theta}^{(\jmath+1)},\tau^{(\jmath+1)}\big)$ via \textbf{Algorithm~\ref{algorithm-2}}\;
		\tcc{\textcolor{blue}{Update $\nu$}}
		$\nu\leftarrow\nu+\frac{1}{\rho}f\big(\mathbf{F}^{(\jmath+1)},\bm{\theta}^{(\jmath+1)},\mathscr{C}_{\mathrm{AB}},\tau^{(\jmath+1)}\big)$\;
		\tcc{\textcolor{blue}{Update $\rho$}}
		$\rho\leftarrow\kappa\rho$\;
		\tcc{\textcolor{blue}{Update iteration counter}}
		$\jmath\leftarrow\jmath+1$\;
	}
	\tcc{\textcolor{blue}{Final assignment}}
	$\mathbf{F}_{\mathsf{opt}}\leftarrow\mathbf{F}^{(\jmath)}$, $\bm{\theta}_{\mathsf{opt}}\leftarrow\bm{\theta}^{(\jmath)}$\;
\end{algorithm}

\begin{algorithm}[tbh]
	\caption{Penalty-Based AO Algorithm to Solve~\eqref{eq:P2} for Fixed $(\nu,\rho)$}
	\label{algorithm-2}
	\KwIn{$\mathbf{F}^{(\jmath)}$, $\bm{\theta}^{(\jmath)}$, $\tau^{(\jmath)}$, $\mu_{\mathbf{F}}$, $\mu_{\bm{\theta}}$, $\nu$, $\rho$}
	\KwOut{$\mathbf{F}^{(\bar{\jmath}+1)}$, $\boldsymbol{\theta}^{(\bar{\jmath}+1)}$, $\tau^{(\bar{\jmath}+1)}$}
	\tcc{\textcolor{blue}{Initial assignment}}
	$\bar{\jmath}\leftarrow1$\;
	$\mathbf{F}^{(\bar{\jmath})}\leftarrow\mathbf{F}^{(\jmath)}$, $\bm{\theta}^{(\bar{\jmath})}\leftarrow\bm{\theta}^{(\jmath)}$, $\tau^{(\bar{\jmath})}\leftarrow\tau^{(\jmath)}$\;
	\Repeat{convergence}{
		\tcc{\textcolor{blue}{Update $\mathbf{F}$ for fixed $(\bm{\theta},\tau)$}}
		$\mathbf{F}^{(\bar{\jmath}+1)}\leftarrow\Pi_{\mathcal{F}}\big\{\mathbf{F}^{(\bar{\jmath})}+\mu_{\mathbf{F}}\nabla_{\mathbf{F}}\big(g_{\nu,\rho}\big(\mathbf{F}^{(\bar{\jmath})},\bm{\theta}^{(\bar{\jmath})},\tau^{(\bar{\jmath})}\big)\big)\big\}$\;
		\tcc{\textcolor{blue}{Update $\bm{\theta}$ for fixed $(\mathbf{F},\tau)$}}
		$\bm{\theta}^{(\bar{\jmath}+1)}\leftarrow\Pi_{\vartheta}\big\{\bm{\theta}^{(\bar{\jmath})}+\mu_{\bm{\theta}}\nabla_{\bm{\theta}}\big(g_{\nu,\rho}\big(\mathbf{F}^{(\bar{\jmath}+1)},\bm{\theta}^{(\bar{\jmath})},\tau^{(\bar{\jmath})}\big)\big)\big\}$\;
		\tcc{\textcolor{blue}{Update $\tau$ for fixed $(\mathbf{F},\bm{\theta})$}}
		$\tau^{(\bar{\jmath}+1)}\leftarrow\max\big\{0,\frac{1}{\mathscr{C}_{\mathrm{AB}}}C_{\mathrm{AB}}\big(\mathbf{F}^{(\bar{\jmath}+1)},\bm{\theta}^{(\bar{\jmath}+1)}\big)-1-\nu\rho\big\}$\;
		\tcc{\textcolor{blue}{Update iteration counter}}
		$\bar{\jmath}\leftarrow\bar{\jmath}+1$\;
	}
	\tcc{\textcolor{blue}{Update assignment}}
	$\mathbf{F}^{(\jmath+1)}\leftarrow\mathbf{F}^{(\bar{\jmath})}$, $\bm{\theta}^{(\jmath+1)}\leftarrow\bm{\theta}^{(\bar{\jmath})}$, $\tau^{(\jmath+1)}\leftarrow\tau^{(\bar{\jmath})}$\;
\end{algorithm}

Equipped with \textbf{Theorems~\ref{thm:grad_Fc}} and~\textbf{\ref{thm:grad_theta}}, we now develop an iterative AO framework to obtain a stationary solution to~\eqref{eq:P2}. The overall AO-based procedure is summarized in \textbf{Algorithm~\ref{algorithm-1}}. Specifically, the algorithm starts from initial values $\mathbf{F}^{(0)}$ and $\bm{\theta}^{(0)}$, while setting $\tau^{(0)}=\nu=0$, $\mu_{\mathbf{F}}=\mu_{\bm{\theta}}=100$, $\rho=10$, and $\kappa=0.1$. For given $(\nu,\rho)$, the variables $(\mathbf{F},\bm{\theta},\tau)$ are updated via \textbf{Algorithm~\ref{algorithm-2}}, whereas the Lagrange multiplier $\nu$ and the penalty parameter $\rho$ are updated in steps~4 and~5 of \textbf{Algorithm~\ref{algorithm-1}}, respectively.

In \textbf{Algorithm~\ref{algorithm-2}}, the transmit precoding matrix $\mathbf{F}$ is first updated for fixed $(\bm{\theta},\tau)$ in step~4, where $\mu_{\mathbf{F}}$ denotes the step-size and $\mathcal{F}$ denotes the feasible set of transmit precoders, defined as
\begin{equation}
	\mathcal{F}\triangleq\big\{\mathbf{F}\in\mathbb{C}^{m_{\mathrm{A}}\times(m_{\min}+m_{\mathrm{A}})}:\|\mathbf{F}\|_{\mathsf{F}}^{2}\leq p_{\max}\big\}.\label{eq:F_FeasibleSet}
\end{equation}
The projection onto the feasible set $\mathcal{F}$ is given by
\begin{equation}
	\Pi_{\mathcal{F}}\{\widetilde{\mathbf{F}}\}=\sqrt{p_{\max}}\ \widetilde{\mathbf{F}}/\max\big\{\|\widetilde{\mathbf{F}}\|_{\mathsf{F}},\sqrt{p_{\max}}\big\}.\label{eq:projection_onto_F}
\end{equation}
After updating $\mathbf{F}$, the RIS beamforming vector is updated. The corresponding feasible set $\vartheta$ is given by
\begin{equation}
	\vartheta\triangleq\big\{\bm{\theta}\in\mathbb{C}^{m_{\mathrm{R}}\times1}:|\theta_{\varkappa}|=1,\forall\varkappa\in\mathcal{M}_{\mathrm{R}}\big\},\label{eq:theta_FeasibleSet}
\end{equation}
and the projection onto $\vartheta$ is given by $\Pi_{\vartheta}\{\widetilde{\bm{\theta}}\}=[\bar{\theta}_{1},\ldots,\bar{\theta}_{m_{\mathrm{R}}}]\trans$, where for each $\varkappa\in\mathcal{M}_{\mathrm{R}}$, we have
\begin{equation}
	\bar{\theta}_{\varkappa}=\begin{cases}
		\widetilde{\theta}_{\varkappa}/|\widetilde{\theta}_{\varkappa}|, & \text{if }\widetilde{\theta}_{\varkappa}\neq0,\\
		\exp(j\phi),\ \phi\in[0,2\pi), & \text{otherwise}.
	\end{cases}\label{eq:projection_onto_theta}
\end{equation}
Finally, it is worth noting that although we initialize the step-sizes as $\mu_{\mathbf{F}}=\mu_{\bm{\theta}}=100$, their effective values at each iteration are determined via a \emph{backtracking line search} based on the Armijo--Goldstein condition~\cite{Armijo}.

\paragraph{Convergence Analysis}

The convergence of the proposed algorithm can be justified within the standard penalty dual decomposition framework combined with cyclic block coordinate optimization. For any fixed multiplier–penalty pair $(\nu,\rho)$, \textbf{Algorithm~\ref{algorithm-2}} generates a sequence by successively updating the blocks $\mathbf{F}$, $\bm{\theta}$, and $\tau$ through the corresponding subproblems of the augmented formulation in~\eqref{eq:P2}, while keeping the remaining blocks fixed. Since each block update is designed to optimize the augmented objective with respect to the corresponding variable block, the resulting inner-iteration objective sequence is monotonically nondecreasing. Furthermore, the transmit power constraint and the unit-modulus RIS constraint ensure that the feasible set associated with $(\mathbf{F},\bm{\theta})$ is compact, whereas the auxiliary variable $\tau$ is bounded by construction. Hence, the augmented objective is bounded from above over the feasible set, which guarantees convergence of the inner objective sequence; accordingly, the inner loop converges to a first-order stationary point of~\eqref{eq:P2} for the given $(\nu,\rho)$. In the outer loop, the multiplier update is given by $\nu^{(\jmath+1)}=\nu^{(\jmath)}+\frac{1}{\rho^{(\jmath)}}f(\mathbf{F}^{(\jmath+1)},\bm{\theta}^{(\jmath+1)},\tau^{(\jmath+1)})$, where $f(\mathbf{F},\bm{\theta},\tau)=0$ denotes the equality representation of the transformed QoS constraint. Equivalently, this yields $f(\mathbf{F}^{(\jmath+1)},\bm{\theta}^{(\jmath+1)},\tau^{(\jmath+1)})=\rho^{(\jmath)}\big(\nu^{(\jmath+1)}-\nu^{(\jmath)}\big)$. Therefore, if the multiplier sequence $\{\nu^{(\jmath)}\}$ is bounded and the penalty parameter satisfies $\rho^{(\jmath)}\to0$, then the equality-constraint residual vanishes asymptotically, namely, $f(\mathbf{F}^{(\jmath+1)},\bm{\theta}^{(\jmath+1)},\tau^{(\jmath+1)})\to0$, thereby establishing asymptotic primal feasibility. Consequently, every accumulation point generated by \textbf{Algorithm~\ref{algorithm-1}} is feasible for the transformed problem~\eqref{eq:P2} and satisfies its first-order optimality condition, and therefore fulfills the Karush–Kuhn–Tucker (KKT) conditions of~\eqref{eq:P2}. Finally, since~\eqref{eq:f_definition} constitutes an equivalent reformulation of the QoS constraint in~\eqref{eq:P0_rate_constraint}, the transformed problem in~\eqref{eq:P2} is equivalent to the original problem in~\eqref{eq:P0}. It follows that any accumulation point satisfying the KKT conditions of~\eqref{eq:P2} is also a KKT point of~\eqref{eq:P0}.

\paragraph{Complexity Analysis}

It is evident that the computational complexity of the proposed PDD-AO framework is dominated by the inner updates in \textbf{Algorithm~\ref{algorithm-2}}. Hence, the complexity of \textbf{Algorithm~\ref{algorithm-1}} can be quantified by counting the required number of complex multiplications in one inner iteration of \textbf{Algorithm~\ref{algorithm-2}} and then multiplying by the numbers of outer and inner iterations. Let $m_{\mathbf{F}}\triangleq m_{\min}+m_{\mathrm{A}}$ denote the number of columns of $\mathbf{F}=[\mathbf{F}_{\mathrm{c}},\mathbf{F}_{\mathrm{s}}]$, and let $I_{\mathrm{out}}$ and $I_{\mathrm{in}}$ denote the numbers of outer and inner iterations, respectively. First, consider the update of $\mathbf{F}$ in Step~4 of \textbf{Algorithm~\ref{algorithm-2}}. From \textbf{Theorem}~\textbf{\ref{thm:grad_Fc}}, the dominant cost in evaluating $\nabla_{\mathbf{F}}g_{\nu,\rho}(\mathbf{F},\bm{\theta},\tau)$ comes from: i) forming the matrix $\mathbf{X}=\mathbf{F}\mathbf{W}$ and the associated cascaded sensing terms, which requires $\mathcal{O}(m_{A}n_{\mathbf{F}}K+m_{R}m_{A}Km_{S})$ operations; ii) constructing and inverting the predicted LMMSE covariance matrix $\widetilde{\mathbf{X}}\widehat{\bm{\Sigma}}_{\mathrm{A,AS}}\widetilde{\mathbf{X}}\herm+\breve{\mathbf{X}}\widehat{\bm{\Sigma}}_{\mathrm{A,RS}}\breve{\mathbf{X}}\herm+\sigma_{\mathrm{S}}^{2}\mathbf{I}$ of dimension $Km_{\mathrm{S}}\times Km_{\mathrm{S}}$, whose complexity is $\mathcal{O}((Km_{\mathrm{S}})^{3})$; and iii) evaluating the communication-related matrices $\mathbf{Q}_{\mathrm{AB}}$, $\mathbf{E}_{\mathrm{AB}}$, and $\mathbf{D}_{\mathrm{AB}}$, whose dominant cost is $\mathcal{O}(m_{\mathrm{B}}^{3})$, while the remaining matrix products contribute $\mathcal{O}(m_{\mathrm{A}}n_{\mathbf{F}}m_{\mathrm{B}})$. Moreover, the projection onto the Frobenius-norm ball in~\eqref{eq:projection_onto_F} has complexity $\mathcal{O}(m_{\mathrm{A}}n_{\mathbf{F}})$. Therefore, the total complexity associated with the $\mathbf{F}$-update is $\mathcal{O}((Km_{\mathrm{S}})^{3}+m_{\mathrm{B}}^{3}+m_{\mathrm{A}}n_{\mathbf{F}}(K+m_{\mathrm{B}})+m_{\mathrm{R}}m_{\mathrm{A}}Km_{\mathrm{S}})$. Next, consider the update of $\bm{\theta}$ in Step~5 of \textbf{Algorithm~\ref{algorithm-2}}. From \textbf{Theorem}~\textbf{\ref{thm:grad_theta}}, the dominant operations in evaluating $\nabla_{\bm{\theta}}g_{\nu,\rho}(\mathbf{F},\bm{\theta},\tau)$ again include the inversion of the same $Km_{\mathrm{S}}\times Km_{\mathrm{S}}$ covariance matrix and the computation of the communication-side inverse/factorization, which contribute $\mathcal{O}((Km_{\mathrm{S}})^{3})$ and $\mathcal{O}(m_{\mathrm{B}}^{3})$, respectively. However, unlike a naïve implementation, the gradient in~\eqref{eq:grad_theta_Closed} involves the operator $\vecd(\cdot)$, so only the diagonal entries of the intermediate $m_{\mathrm{R}}\times m_{\mathrm{R}}$ matrices are required. In addition, the commutation matrices appearing in the derivative expressions are used only as permutation operators and need not be explicitly constructed. Consequently, the RIS-related arithmetic is obtained by directly evaluating the required $m_{\mathrm{R}}$ diagonal bilinear forms, which incurs complexity $\mathcal{O}(m_{\mathrm{R}}m_{\mathrm{A}}(Km_{\mathrm{S}}+m_{\mathrm{B}}))$, while the projection onto the unit-modulus set in~\eqref{eq:projection_onto_theta} is elementwise and has negligible complexity. Thus, the total complexity of the $\bm{\theta}$-update is $\mathcal{O}((Km_{\mathrm{S}})^{3}+m_{\mathrm{B}}^{3}+m_{\mathrm{A}}n_{\mathbf{F}}K+m_{\mathrm{R}}m_{\mathrm{A}}(Km_{\mathrm{S}}+m_{\mathrm{B}}))$. Finally, the update of $\tau$ in Step~6 is available in closed form and therefore has negligible complexity. Putting these observations together, the per-inner-iteration complexity of \textbf{Algorithm~\ref{algorithm-2}} can be expressed as $\mathcal{O}\!\big(2(Km_{\mathrm{S}})^{3}+2m_{\mathrm{B}}^{3}+m_{\mathrm{A}}n_{\mathbf{F}}(2K+m_{\mathrm{B}})+m_{\mathrm{R}}m_{\mathrm{A}}(2Km_{\mathrm{S}}+m_{\mathrm{B}})\big).$

Accordingly, the overall computational complexity of \textbf{Algorithm~\ref{algorithm-1}} is given by
\begin{multline*}
\mathcal{O}\!\big(I_{\mathrm{out}}I_{\mathrm{in}}\big[2(Km_{\mathrm{S}})^{3}+2m_{\mathrm{B}}^{3}+m_{\mathrm{A}}n_{\mathbf{F}}(2K+m_{\mathrm{B}})\\
+m_{\mathrm{R}}m_{\mathrm{A}}(2Km_{\mathrm{S}}+m_{\mathrm{B}})\big]\big).
\end{multline*}
It is worth noting that, because only the diagonal terms are needed in the $\bm{\theta}$-gradient and the commutation matrices are implemented implicitly, the dependence on the number of RIS meta-atoms $m_{\mathrm{R}}$ is linear rather than quadratic or cubic. Hence, under a practical implementation, the dominant computational burden arises from the sensing- and communication-side matrix inversions, whereas the RIS-related operations remain scalable even for large $m_{\mathrm{R}}$. 

\section{Results and Discussion\label{sec:Results-and-Discussion}}

In this section, we provide comprehensive numerical results to evaluate the performance of the system under consideration and provide detailed discussion on the results to obtain important system design insights. For this purpose, we first provide the details of the system setup, followed by numerical results and corresponding discussion. 

\subsection{System Setup}

The transmitter (A), receiver (B), RIS (R), and sensor (S) are located at $(0,0,0)\,\mathrm{m}$, $(100,20,5)\,\mathrm{m}$, $(50,10,5)\,\mathrm{m}$, and $(20,5,0)\,\mathrm{m}$, respectively. The adopted node geometry is \emph{similar} to representative setups commonly used in the RIS literature, e.g.,~\cite{21_TCOM_GeneralDistance, 21_TCOM_ChannelEstimation_Distance}. The small-scale fading on the A–B, A–S, and R–B links is modeled as Rician with Rician factor $3\,\mathrm{dB}$, whereas the R–S link is modeled as Rayleigh faded. The A–R link is assumed to be purely line-of-sight. The large-scale path loss between any two nodes is modeled as $-30-10\alpha\log_{10}(d/d_{0})\,\mathrm{dB}$, where $d$ denotes the inter-node distance, $d_{0}=1\,\mathrm{m}$ is the reference distance, and $\alpha$ is the path-loss exponent; specifically, $\alpha=3.6$ for the A–B and A–S links, and $\alpha=2.2$ for the A–R, R–S, and R–B links. Unless stated otherwise, we set $m_{\mathrm{A}}=m_{\mathrm{S}}=4$, $m_{\mathrm{B}}=16$, $m_{\mathrm{R}}=64$, $K=16$, $m_{\min}=\min\{m_{\mathrm{A}},m_{\mathrm{B}}\}$, $p_{\max}=10\,\mathrm{dBm}$, and $\mathscr{C}_{\mathrm{AB}}=5\,\mathrm{nats/s/Hz}$. The carrier frequency and system bandwidth are $2\,\mathrm{GHz}$ and $20\,\mathrm{MHz}$, respectively, and the noise power spectral density is $-174\,\mathrm{dBm/Hz}$, which yields $\sigma_{\mathrm{B}}^{2}=\sigma_{\mathrm{S}}^{2}=\sigma^{2}=10^{\frac{-174-30}{10}}\times20\times10^{6}\,\mathrm{W}$. Furthermore, we set $\varsigma_{\mathrm{AB}}^{2}=\varsigma_{\mathrm{RB}}^{2}=10^{2}\sigma^{2}$. The spatial correlation matrices at A, B, and S follow Hermitian Toeplitz structures corresponding to uniform linear arrays (ULAs) with exponential correlation and adjacent-element correlation coefficient $0.5$, whereas the RIS correlation matrix is generated according to~\cite{RIS_Correlation_Emil} with inter-meta-atom spacing $\lambda/4$. To model prior mismatch, we set $\widehat{\bm{\Sigma}}_{\mathrm{A,AS}}=\bm{\Sigma}_{\mathrm{AS}}+\varsigma_{\mathrm{A,AS}}^{2}\mathbf{I}_{m_{\mathrm{S}}m_{\mathrm{A}}}$, $\widehat{\bm{\Sigma}}_{\mathrm{S,AS}}=\bm{\Sigma}_{\mathrm{AS}}+\varsigma_{\mathrm{S,AS}}^{2}\mathbf{I}_{m_{\mathrm{S}}m_{\mathrm{A}}}$, $\widehat{\bm{\Sigma}}_{\mathrm{A,RS}}=\bm{\Sigma}_{\mathrm{RS}}+\varsigma_{\mathrm{A,RS}}^{2}\mathbf{I}_{m_{\mathrm{S}}m_{\mathrm{R}}}$, and $\widehat{\bm{\Sigma}}_{\mathrm{S,RS}}=\bm{\Sigma}_{\mathrm{RS}}+\varsigma_{\mathrm{S,RS}}^{2}\mathbf{I}_{m_{\mathrm{S}}m_{\mathrm{R}}}$, together with $\widehat{\bm{\mu}}_{\mathrm{A,AS}}=\bm{\mu}_{\mathrm{AS}}+\varsigma_{\mathrm{A,AS}}\mathbf{r}_{\mathrm{A,AS}}$ and $\widehat{\bm{\mu}}_{\mathrm{S,AS}}=\bm{\mu}_{\mathrm{AS}}+\varsigma_{\mathrm{S,AS}}\mathbf{r}_{\mathrm{S,AS}}$, where $\mathbf{r}_{\mathrm{A,AS}},\mathbf{r}_{\mathrm{S,AS}}\sim\mathcal{CN}(\mathbf{0},\mathbf{I})$. Unless stated otherwise, the prior-error variances are chosen as $\varsigma_{\mathrm{A,AS}}^{2}=\varsigma_{\mathrm{S,AS}}^{2}=0$ and $\varsigma_{\mathrm{A,RS}}^{2}=\varsigma_{\mathrm{S,RS}}^{2}=0$ under perfect-prior assumptions, and as $\varsigma_{\mathrm{A,AS}}^{2}=\varsigma_{\mathrm{S,AS}}^{2}=(5\times10^{5})\sigma^{2}$ and $\varsigma_{\mathrm{A,RS}}^{2}=\varsigma_{\mathrm{S,RS}}^{2}=(5\times10^{5})\sigma^{2}$ under imperfect-prior assumptions. All NMSE results are averaged over $1000$ independent channel realizations.

\subsection{Convergence Results}

\begin{figure}
\centering 
\includegraphics[width=0.68\columnwidth]{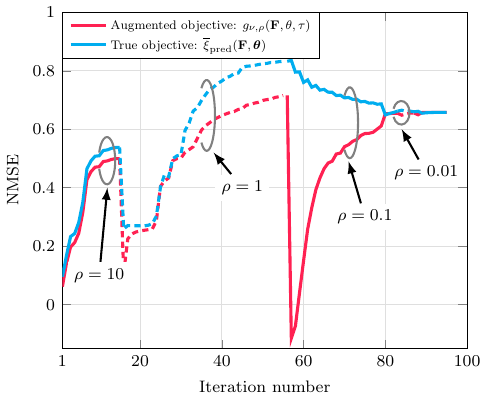}
\caption{Convergence behavior of the proposed AO algorithm.}
\label{fig:conv}
\end{figure}

In Fig.~\ref{fig:conv}, we illustrate the convergence behavior of the proposed AO framework in \textbf{Algorithm~\ref{algorithm-1}} for the RIS-aided setting under perfect prior knowledge of the $\mathbf{H}_{\mathrm{AS}}$ and $\mathbf{H}_{\mathrm{RS}}$ links at both A and S. The figure shows that the algorithm reaches convergence within a limited number of iterations. Starting from $\nu=0$ and $\rho=10$, the inner AO loop in ~\textbf{Algorithm~\ref{algorithm-2}} updates $(\mathbf{F},\bm{\theta},\tau)$ for fixed $(\nu,\rho)$, thereby yielding a monotonic increase of the augmented Lagrangian $g_{\nu,\rho}(\mathbf{F},\bm{\theta},\tau)$ until convergence of the inner subproblem. Subsequently, the multiplier $\nu$ and the penalty parameter $\rho$ are updated according to \textbf{Algorithm~\ref{algorithm-1}}, and the process is repeated. The sharp drops observed in the augmented objective at the outer iterations are expected, since the updates of $\nu$ and $\rho$ jointly tighten the enforcement of the equality constraint. More specifically, reducing $\rho$ strengthens the quadratic penalty component, while the multiplier update $\nu\leftarrow\nu+\frac{1}{\rho}f(\mathbf{F},\bm{\theta},C_{\mathrm{AB}},\tau)$ also increases the influence of the linear dual term associated with the constraint residual. Consequently, after each outer update, constraint violations are penalized more aggressively, which leads to the observed downward jump in the augmented objective. Moreover, the noticeable mismatch between the true objective and the augmented objective during the early iterations indicates that, although the iterates improve the penalized problem in~\eqref{eq:P2}, they are not yet primal-feasible for the original formulation in~\eqref{eq:P1}, i.e., the residual $f(\mathbf{F},\bm{\theta},C_{\mathrm{AB}},\tau)$ remains nonzero. As the outer loop advances, the joint updates of $\nu$ and $\rho$ drive this residual toward zero, so that the iterates eventually satisfy the original constraint as well, and the gap between the augmented and true objectives consequently vanishes.

\subsection{Impact of the Number of RIS Meta-Atoms}

\begin{figure}
\centering 
\includegraphics[width=0.68\columnwidth]{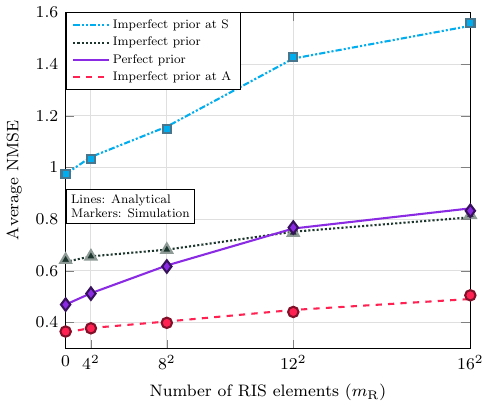}
\caption{Impact of the number of meta-atoms at the RIS on the NMSE at S.}
\label{fig:vary_mR}
\end{figure}

\begin{figure*}
\centering
\begin{minipage}{.32\textwidth}
  \centering
    \includegraphics[width=1\columnwidth]{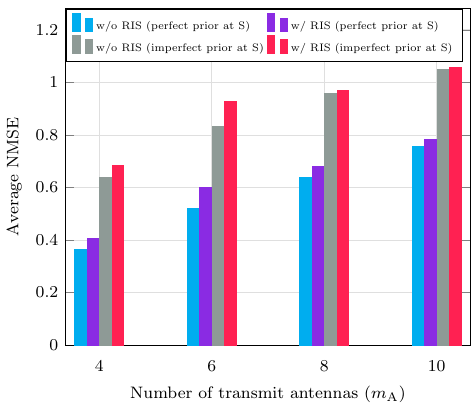}
\caption{Impact of the number of transmit antennas on the NMSE at S.}
\label{fig:vary_mA}
\end{minipage}%
\hfill 
\begin{minipage}{.32\textwidth}
  \centering
  \includegraphics[width=1\columnwidth]{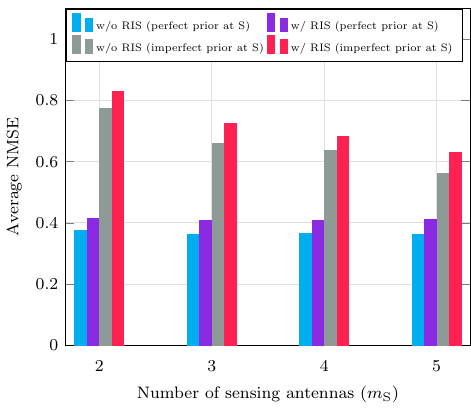}
\caption{Impact of the number of sensor antennas on the NMSE at S.}
\label{fig:vary_mS}
\end{minipage}%
\hfill 
\begin{minipage}{.32\textwidth}
  \centering
  \includegraphics[width=1\columnwidth]{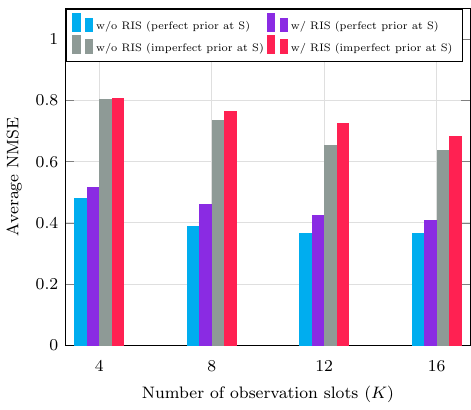}
\caption{Impact of the observation length on the NMSE at S.}
\label{fig:vary_K}
\end{minipage}%
\end{figure*}

In Fig.~\ref{fig:vary_mR}, we examine the impact of the number of RIS meta-atoms on the average NMSE at the sensor for four different prior-knowledge configurations. Here, ``Perfect prior'' denotes the case in which both A and S possess correct statistical priors, ``Imperfect prior'' corresponds to the case in which both nodes operate with erroneous priors, and ``Imperfect prior at A'' (resp. ``Imperfect prior at S'') denotes the case in which only A (resp. S) has imperfect prior knowledge while the other node has perfect priors. The solid curves are obtained analytically from the closed-form true Bayesian MSE expression in \eqref{eq:true_MSE}, evaluated at the optimized design $(\mathbf{F},\bm{\theta})$ and then normalized to produce the NMSE, whereas the markers are obtained by Monte Carlo simulation, where the received signal at S is generated according to the sensing model, the mismatched LMMSE estimator in \eqref{eq:estimate_at_S}--\eqref{eq:LMMSE_filter_at_S} is applied, and the resulting channel-estimation error is averaged over random realizations. It is observed that the analytical curves closely match the simulation markers for all values of $m_{\mathrm{R}}$, thereby validating the accuracy of the developed NMSE characterization. Moreover, the average NMSE increases monotonically with $m_{\mathrm{R}}$ in all four cases, which shows that a larger RIS offers additional spatial degrees of freedom for shaping the reflected signal over the A--R--S path and, consequently, for degrading the sensing capability at S. In particular, the point $m_{\mathrm{R}}=0$ corresponds to the non-RIS scenario, and it yields the smallest NMSE in each prior setting; therefore, the non-RIS case is the least favorable from the transmitter-privacy perspective. This observation clearly demonstrates the privacy advantage enabled by RIS deployment. The figure also reveals a clear ordering among the four prior settings: the highest NMSE is achieved when the prior is imperfect at S, followed by the case in which both A and S have imperfect priors, then the perfect-prior case, whereas the lowest NMSE is obtained when the prior is imperfect only at A. This behavior is physically intuitive. When A has imperfect priors, it optimizes $(\mathbf{F},\bm{\theta})$ using a mismatched predicted objective, so the resulting design is no longer well aligned with the true channel statistics governing the sensor's estimator; as a result, the actual degradation induced at S is reduced, leading to a lower NMSE. In contrast, when S has imperfect priors, its Wiener filter is itself mismatched to the true observation model, which directly deteriorates channel-estimation accuracy and leads to a larger NMSE. When both nodes have imperfect priors, these two effects coexist: the prior mismatch at S still degrades the estimator, whereas the prior mismatch at A partially weakens the effectiveness of the privacy-oriented design. Consequently, the corresponding performance lies between the cases of ``Imperfect prior at S'' and ``Perfect prior''.
\subsection{Impact of the Number of Transmit Antennas}

For the remaining figures, we restrict attention to the case of imperfect prior at A and consider two representative sub-cases at S, namely, ``perfect prior at S'' and ``imperfect prior at S'', since these two settings most clearly reveal the privacy implications of transmitter-side prior mismatch under different sensing conditions at the adversarial sensor. In Fig.~\ref{fig:vary_mA}, we investigate the impact of the number of transmit antennas on the average NMSE at the sensor for both RIS-aided and non-RIS architectures. It is observed that the average NMSE increases monotonically with $m_{\mathrm{A}}$ in all four cases. This trend can be understood from two complementary perspectives. First, for fixed $K$ and $m_{\mathrm{S}}$, increasing $m_{\mathrm{A}}$ enlarges the dimension of the unknown channel $\mathbf{H}_{\mathrm{AS}}\in\mathbb{C}^{m_{\mathrm{S}}\times m_{\mathrm{A}}}$ to be estimated, whereas the size of the observation available at the sensor, $\mathbf{Y}_{\mathrm{S}}\in\mathbb{C}^{m_{\mathrm{S}}\times K}$, remains unchanged. Equivalently, in the vectorized model, the number of unknown coefficients in $\mathbf{h}_{\mathrm{AS}}$ grows with $m_{\mathrm{A}}$, while the number of observation dimensions remains fixed at $Km_{\mathrm{S}}$. Hence, the estimation problem at S becomes progressively more challenging as $m_{\mathrm{A}}$ increases, which naturally results in a higher NMSE. Second, a larger transmit array provides additional spatial degrees of freedom for the design of the transmit precoder, and, in the RIS-aided case, these degrees of freedom can be further exploited jointly with the RIS phase profile to more effectively impair sensing at S while maintaining the communication requirement at B. This explains why, for every value of $m_{\mathrm{A}}$, the RIS-aided scheme consistently outperforms its non-RIS counterpart under the same prior setting. Moreover, the ordering between the perfect-prior and imperfect-prior cases at S remains consistent with that already observed in Fig.~\ref{fig:vary_mR}. Overall, Fig.~\ref{fig:vary_mA} shows that increasing the number of transmit antennas is beneficial for transmitter privacy, and that RIS assistance further amplifies this gain.

\subsection{Impact of the Number of Sensor Antennas}

In Fig.~\ref{fig:vary_mS}, we investigate the impact of the number of sensor antennas on the average NMSE for both RIS-aided and non-RIS architectures, while keeping the transmitter-side prior imperfect and considering the two representative cases of perfect prior at S and imperfect prior at S. A first noteworthy observation is that, when S has perfect prior knowledge, the average NMSE remains nearly unchanged as $m_{\mathrm{S}}$ increases. This behavior can be understood by noting that increasing $m_{\mathrm{S}}$ enlarges not only the observation matrix at the sensor but also the dimension of the channel $\mathbf{H}_{\mathrm{AS}}\in\mathbb{C}^{m_{\mathrm{S}}\times m_{\mathrm{A}}}$ to be estimated. Hence, the sensor acquires a larger spatial observation space, but it must simultaneously infer a proportionally larger number of channel coefficients. Since the estimator at S is statistically matched in the perfect-prior case, these two effects largely balance each other in normalized terms, and the resulting average NMSE changes only marginally with $m_{\mathrm{S}}$. In contrast, when S operates with imperfect prior knowledge, the average NMSE decreases as $m_{\mathrm{S}}$ increases. In this case, the additional sensing antennas provide a richer spatial observation that partially compensates for the prior mismatch, thereby improving the conditioning of the estimation problem and reducing the mismatch-induced error. The figure further shows that, for both prior settings at S and for all values of $m_{\mathrm{S}}$, the RIS-aided architecture consistently achieves a higher NMSE than its non-RIS counterpart. Therefore, although increasing the sensing capability of the adversarial node can alleviate the impact of prior mismatch at S, RIS assistance continues to provide a clear privacy advantage throughout the entire range of $m_{\mathrm{S}}$.

\subsection{Impact of the Observation Length}

\begin{figure*}
\centering
\begin{minipage}{.32\textwidth}
  \includegraphics[width=1\columnwidth]{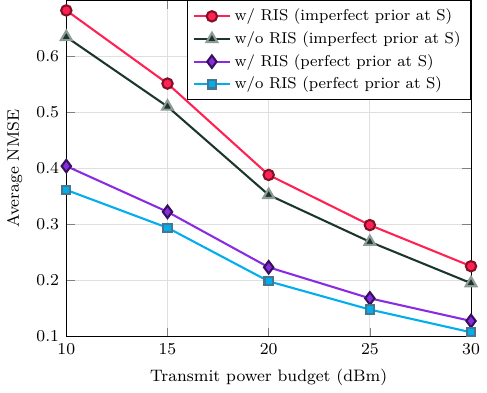}
\caption{Impact of transmit power budget on the NMSE at S.}
\label{fig:vary_p_max}
\end{minipage}%
\hfill 
\begin{minipage}{.32\textwidth}
  \centering \includegraphics[width=1\columnwidth]{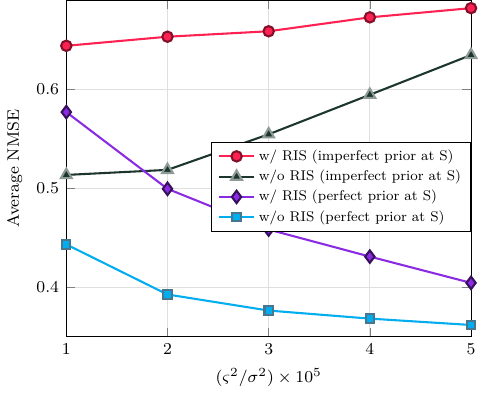}
\caption{Impact of imperfect prior on the NMSE at S.}
\label{fig:vary_prior}
\end{minipage}%
\hfill 
\begin{minipage}{.32\textwidth}
  \centering
  \includegraphics[width=1\columnwidth]{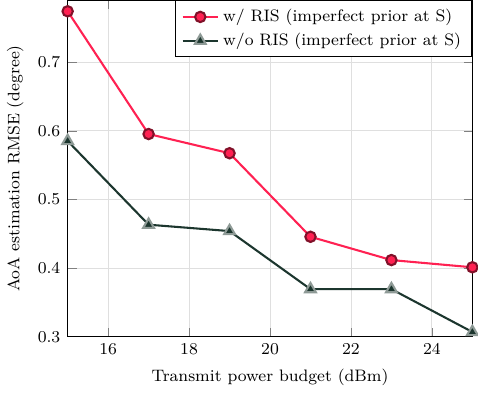}
	\caption{Impact of transmit power budget on the AoA estimation RMSE at S.}
	\label{fig:AoA_RMSE}
\end{minipage}%
\end{figure*}

In Fig.~\ref{fig:vary_K}, we study the effect of the observation length $K$ on the average NMSE at the sensor. As expected, the average NMSE decreases with $K$ for all considered cases. This behavior follows directly from the fact that a larger observation horizon provides the sensor with more temporal samples, thereby increasing the effective number of measurements available for channel inference and improving the accuracy of the LMMSE estimator. In other words, a longer sensing duration reduces the uncertainty in the channel estimate and strengthens the sensing capability of the adversarial node. Nevertheless, the RIS-aided scheme consistently yields a higher NMSE than the non-RIS scheme for both prior settings at S across the entire range of $K$. This is an important observation, since it shows that the privacy benefit of RIS assistance is not limited to short observation windows; rather, it persists even when the sensor is allowed to collect a substantially larger number of samples. Thus, while increasing $K$ is beneficial for the sensor from an estimation standpoint, the proposed RIS-aided design remains effective in preserving transmitter privacy.

\subsection{Impact of Transmit Power Budget}

In Fig.~\ref{fig:vary_p_max}, we examine the impact of the transmit power budget on the average NMSE at the sensor for both RIS-aided and non-RIS architectures, while restricting attention to the case of imperfect prior at A and considering the two representative subcases of perfect prior at S and imperfect prior at S. It is observed that the average NMSE decreases monotonically with $p_{\max}$ in all four cases. Although a larger power budget gives A greater flexibility to design $(\mathbf{F},\bm{\theta})$ in a privacy-aware manner, the dominant effect is that the received observation at S becomes stronger as $p_{\max}$ increases. More specifically, the same transmit signal $\mathbf{X}$ simultaneously drives both the desired sensing component $\mathbf{H}_{\mathrm{AS}}\mathbf{X}$ and the RIS-induced nuisance component $\mathbf{H}_{\mathrm{RS}}\bm{\Theta}\mathbf{H}_{\mathrm{AR}}\mathbf{X}$. Hence, increasing $p_{\max}$ does not merely intensify interference at S; it also strengthens the direct A–S observation from which S estimates the channel. Since the cascaded RIS term constitutes a structured impairment rather than an independently powered jammer, the net effect of increasing $p_{\max}$ is an improvement in the effective observation quality at S, which leads to a lower NMSE. Nevertheless, for every value of $p_{\max}$, the RIS-aided scheme consistently yields a higher NMSE than the corresponding non-RIS benchmark under the same prior setting. This shows that, although a larger transmit power budget improves the sensing capability of the adversarial node, RIS-assisted propagation shaping continues to preserve a non-negligible privacy advantage across the entire operating range.

\subsection{Impact of the Imperfect Priors}

In Fig.~\ref{fig:vary_prior}, we investigate the impact of the prior-error variance on the average NMSE at the sensor for both RIS-aided and non-RIS architectures, while restricting attention to the case of imperfect prior at~A and considering the two representative subcases of perfect prior at~S and imperfect prior at~S. Here, whenever prior mismatch is present, the corresponding prior-error variances are set equal to a common value $\varsigma^{2}$. A clear contrast is observed between the two sensing conditions at~S. When~S has perfect prior knowledge, the average NMSE decreases as $\varsigma^{2}$ increases. This behavior is physically intuitive, since a larger prior mismatch at~A makes the design of $(\mathbf{F},\bm{\theta})$ increasingly driven by an inaccurate predicted objective. As a result, the resulting privacy-oriented design becomes less aligned with the true channel statistics governing the sensor's estimator, so that the actual impairment induced at~S is weakened and the sensor can estimate the channel more accurately, thereby reducing the NMSE. In contrast, when S also operates with imperfect prior knowledge, the average NMSE increases monotonically with $\varsigma^{2}$. In this case, increasing $\varsigma^{2}$ not only preserves the transmitter-side design mismatch, but also directly worsens the mismatch in the Wiener filter employed at~S, and this estimator mismatch becomes the dominant effect, leading to a larger NMSE. The figure further shows that, for every value of $\varsigma^{2}$ and under both prior settings at~S, the RIS-aided architecture consistently achieves a higher NMSE than its non-RIS counterpart. Therefore, although severe transmitter-side prior mismatch can partially reduce the effectiveness of the privacy-oriented design, RIS assistance continues to provide a clear privacy advantage, and this advantage becomes particularly pronounced when the sensor itself also suffers from prior mismatch.

\subsection{AoA Privacy Interpretation}


To connect transmitter privacy to a more direct sensing metric, we next evaluate the AoA estimation performance at the malicious sensor using the estimated channel $\widehat{\mathbf{H}}_{\mathrm{AS}}$, via a Bartlett spatial-spectrum search over candidate angles. We restrict attention here to the case of imperfect prior at S, since when S has perfect prior knowledge, the prior mean may already contain deterministic transmitter-direction information, and therefore AoA estimation from $\widehat{\mathbf{H}}_{\mathrm{AS}}$ is no longer a meaningful privacy indicator. Fig.~\ref{fig:AoA_RMSE} shows that the root mean square error (RMSE) of AoA estimation decreases with the transmit-power budget for both the RIS-aided and non-RIS schemes, which is consistent with the trend already observed in Fig.~\ref{fig:vary_p_max} for the channel-estimation NMSE, i.e., as $p_{\max}$ increases, the observation quality at S improves, and the resulting estimate of $\mathbf{H}_{\mathrm{AS}}$ becomes more accurate. Nevertheless, the RIS-aided scheme consistently yields a higher AoA RMSE than the non-RIS benchmark over the entire range of $p_{\max}$. This observation is important because it shows that the privacy benefit of the proposed design is not limited to the intermediate channel-estimation metric, but also translates into degraded inference of a physically meaningful transmitter attribute, namely, its angle of arrival at the sensor. Therefore, RIS-assisted propagation shaping helps preserve transmitter privacy not only in terms of channel-estimation error, but also in terms of the AoA estimation accuracy achieved at the sensor.

\section{Conclusion\label{sec:Conclusion}}

This paper studied transmitter privacy in an RIS-aided multi-antenna wireless system in the presence of a malicious sensing node seeking to estimate the transmitter--sensor channel. A privacy-oriented joint optimization framework was developed for the transmit precoder and RIS reflection coefficients, with the aim of maximizing the malicious sensor's predicted channel-estimation error while guaranteeing the desired communication performance at the legitimate receiver under power and RIS constraints. To tackle the resulting non-convex problem, an AO-based method was proposed. The numerical results showed that RIS-assisted propagation shaping can significantly impair unauthorized channel estimation and provide clear privacy gains relative to non-RIS benchmarks, without sacrificing reliable communication. They further showed that these gains are also reflected in degraded AoA estimation at the malicious sensor, thereby providing a more direct sensing-level interpretation of transmitter privacy. 

An important direction for future work is to extend the proposed framework to multi-user scenarios with multiple legitimate receivers and multiple possibly colluding malicious sensors, where spatially distributed attackers may share observations and/or local channel estimates, leading to a richer joint estimation model and potentially different privacy--communication tradeoffs.

\appendices{}

\section{\label{sec:proof_Prop-1}Proof of Proposition~\ref{Prop-1}}

Substituting~\eqref{eq:vec_observation_at_S} into~\eqref{eq:estimate_at_S}, the estimation error can be written as
\begin{align}
	\mathbf{e}_{\mathrm{S}}=\  & \mathbf{h}_{\mathrm{AS}}-\widehat{\mathbf{h}}_{\mathrm{S,AS}}=\mathbf{h}_{\mathrm{AS}}-\big\{\widehat{\bm{\mu}}_{\mathrm{S,AS}}+\mathbf{R}_{\mathrm{S}}\big(\mathbf{y}_{\mathrm{S}}-\widetilde{\mathbf{X}}\widehat{\bm{\mu}}_{\mathrm{S,AS}}\big)\big\}\nonumber \\
	=\  & \mathbf{h}_{\mathrm{AS}}\!-\!\big\{\widehat{\bm{\mu}}_{\mathrm{S,AS}}\!+\!\mathbf{R}_{\mathrm{S}}\big(\widetilde{\mathbf{X}}\mathbf{h}_{\mathrm{AS}}\!+\!\breve{\mathbf{X}}\mathbf{h}_{\mathrm{RS}}\!+\!\mathbf{n}_{\mathrm{S}}\!-\!\widetilde{\mathbf{X}}\widehat{\bm{\mu}}_{\mathrm{S,AS}}\big)\big\}\nonumber \\
	=\  & \big(\mathbf{h}_{\mathrm{AS}}\!-\!\mathbf{R}_{\mathrm{S}}\widetilde{\mathbf{X}}\mathbf{h}_{\mathrm{AS}}\big)\!\nonumber \\
	& \qquad-\!\big(\widehat{\bm{\mu}}_{\mathrm{S,AS}}-\mathbf{R}_{\mathrm{S}}\widetilde{\mathbf{X}}\widehat{\bm{\mu}}_{\mathrm{S,AS}}\big)\!-\!\mathbf{R}_{\mathrm{S}}\big(\breve{\mathbf{X}}\mathbf{h}_{\mathrm{RS}}+\mathbf{n}_{\mathrm{S}}\big)\nonumber \\
	=\  & \big(\mathbf{I}_{m_{\mathrm{S}}m_{\mathrm{A}}}-\mathbf{R}_{\mathrm{S}}\widetilde{\mathbf{X}}\big)\big(\mathbf{h}_{\mathrm{AS}}-\widehat{\bm{\mu}}_{\mathrm{S,AS}}\big)-\mathbf{R}_{\mathrm{S}}\big(\breve{\mathbf{X}}\mathbf{h}_{\mathrm{RS}}+\mathbf{n}_{\mathrm{S}}\big)\nonumber \\
	=\  & \big(\mathbf{I}_{m_{\mathrm{S}}m_{\mathrm{A}}}\!-\!\mathbf{R}_{\mathrm{S}}\widetilde{\mathbf{X}}\big)\big(\mathbf{h}_{\mathrm{AS}}\!-\!\bm{\mu}_{\mathrm{AS}}\big)\!\nonumber \\
	& +\!\big(\mathbf{I}_{m_{\mathrm{S}}m_{\mathrm{A}}}\!-\!\mathbf{R}_{\mathrm{S}}\widetilde{\mathbf{X}}\big)\big(\bm{\mu}_{\mathrm{AS}}\!-\!\widehat{\bm{\mu}}_{\mathrm{S,AS}}\big)\!-\!\mathbf{R}_{\mathrm{S}}\big(\breve{\mathbf{X}}\mathbf{h}_{\mathrm{RS}}+\mathbf{n}_{\mathrm{S}}\big).\label{eq:appendix_eS}
\end{align}
Since $\mathbb{E}\{\mathbf{h}_{\mathrm{AS}}-\bm{\mu}_{\mathrm{AS}}\}=\mathbf{0}$, $\mathbb{E}\{\mathbf{h}_{\mathrm{RS}}\}=\mathbf{0}$, and $\mathbb{E}\{\mathbf{n}_{\mathrm{S}}\}=\mathbf{0}$, taking expectation on both sides of~\eqref{eq:appendix_eS} yields
\begin{equation}
	\mathbb{E}\{\mathbf{e}_{\mathrm{S}}\}=\big(\mathbf{I}_{m_{\mathrm{S}}m_{\mathrm{A}}}-\mathbf{R}_{\mathrm{S}}\widetilde{\mathbf{X}}\big)\big(\bm{\mu}_{\mathrm{AS}}-\widehat{\bm{\mu}}_{\mathrm{S,AS}}\big),
\end{equation}
which proves~\eqref{eq:mean_eS}.

Next, define the zero-mean error fluctuation as
\begin{align}
	\widetilde{\mathbf{e}}_{\mathrm{S}}=\  & \mathbf{e}_{\mathrm{S}}-\mathbb{E}\{\mathbf{e}_{\mathrm{S}}\}\nonumber \\
	=\  & \big(\mathbf{I}_{m_{\mathrm{S}}m_{\mathrm{A}}}-\mathbf{R}_{\mathrm{S}}\widetilde{\mathbf{X}}\big)\big(\mathbf{h}_{\mathrm{AS}}-\bm{\mu}_{\mathrm{AS}}\big)-\mathbf{R}_{\mathrm{S}}\big(\breve{\mathbf{X}}\mathbf{h}_{\mathrm{RS}}+\mathbf{n}_{\mathrm{S}}\big).\label{eq:appendix_tilde_eS}
\end{align}
Using~\eqref{eq:appendix_tilde_eS}, the error covariance is given by
\begin{align}
	& \cov\{\mathbf{e}_{\mathrm{S}}\}= \mathbb{E}\big\{\widetilde{\mathbf{e}}_{\mathrm{S}}\widetilde{\mathbf{e}}_{\mathrm{S}}\herm\big\}\nonumber \\
	=\  & \mathbb{E}\big\{\big(\mathbf{I}_{m_{\mathrm{S}}m_{\mathrm{A}}}-\mathbf{R}_{\mathrm{S}}\widetilde{\mathbf{X}}\big)\big(\mathbf{h}_{\mathrm{AS}}-\bm{\mu}_{\mathrm{AS}}\big) \nonumber \\
	& \qquad \qquad \times \big(\mathbf{h}_{\mathrm{AS}}-\bm{\mu}_{\mathrm{AS}}\big)\herm\big(\mathbf{I}_{m_{\mathrm{S}}m_{\mathrm{A}}}-\mathbf{R}_{\mathrm{S}}\widetilde{\mathbf{X}}\big)\herm\big\} \nonumber \\
	& +\mathbb{E}\big\{\mathbf{R}_{\mathrm{S}}\big(\breve{\mathbf{X}}\mathbf{h}_{\mathrm{RS}}+\mathbf{n}_{\mathrm{S}}\big)\big(\breve{\mathbf{X}}\mathbf{h}_{\mathrm{RS}}+\mathbf{n}_{\mathrm{S}}\big)\herm\mathbf{R}_{\mathrm{S}}\herm\big\}\nonumber \\
	& -\mathbb{E}\big\{\big(\mathbf{I}_{m_{\mathrm{S}}m_{\mathrm{A}}}\!-\!\mathbf{R}_{\mathrm{S}}\widetilde{\mathbf{X}}\big)\big(\mathbf{h}_{\mathrm{AS}}\!-\!\bm{\mu}_{\mathrm{AS}}\big)\big(\breve{\mathbf{X}}\mathbf{h}_{\mathrm{RS}}\!+\!\mathbf{n}_{\mathrm{S}}\big)\herm\mathbf{R}_{\mathrm{S}}\herm\big\}\nonumber \\
	& -\mathbb{E}\big\{\mathbf{R}_{\mathrm{S}}\big(\breve{\mathbf{X}}\mathbf{h}_{\mathrm{RS}}\!+\!\mathbf{n}_{\mathrm{S}}\big)\big(\mathbf{h}_{\mathrm{AS}}\!-\!\bm{\mu}_{\mathrm{AS}}\big)\herm\big(\mathbf{I}_{m_{\mathrm{S}}m_{\mathrm{A}}}\!-\!\mathbf{R}_{\mathrm{S}}\widetilde{\mathbf{X}}\big)\herm\big\}.\label{eq:appendix_cov_expand_1}
\end{align}
Because $\mathbf{h}_{\mathrm{AS}}-\bm{\mu}_{\mathrm{AS}}$, $\mathbf{h}_{\mathrm{RS}}$, and $\mathbf{n}_{\mathrm{S}}$ are mutually independent and zero-mean, the last two cross terms in~\eqref{eq:appendix_cov_expand_1} vanish. Moreover,
\begin{equation}
	\mathbb{E}\big\{(\mathbf{h}_{\mathrm{AS}}-\bm{\mu}_{\mathrm{AS}})(\mathbf{h}_{\mathrm{AS}}-\bm{\mu}_{\mathrm{AS}})\herm\big\}=\bm{\Sigma}_{\mathrm{AS}}, \label{eq:appendix_cov_expand_2}
\end{equation}
and
\begin{align}
	& \mathbb{E}\big\{\big(\breve{\mathbf{X}}\mathbf{h}_{\mathrm{RS}}+\mathbf{n}_{\mathrm{S}}\big)\big(\breve{\mathbf{X}}\mathbf{h}_{\mathrm{RS}}+\mathbf{n}_{\mathrm{S}}\big)\herm\big\} \nonumber \\
	= \ &  \breve{\mathbf{X}}\mathbb{E}\big\{\mathbf{h}_{\mathrm{RS}}\mathbf{h}_{\mathrm{RS}}\herm\big\}\breve{\mathbf{X}}\herm+\mathbb{E}\big\{\mathbf{n}_{\mathrm{S}}\mathbf{n}_{\mathrm{S}}\herm\big\}\nonumber \\
	= \  & \breve{\mathbf{X}}\bm{\Sigma}_{\mathrm{RS}}\breve{\mathbf{X}}\herm+\sigma_{\mathrm{S}}^{2}\mathbf{I}
	= \bm{\Sigma}_{\mathrm{ARS}}+\sigma_{\mathrm{S}}^{2}\mathbf{I}. \label{eq:appendix_cov_expand_3}
\end{align}
Using~\eqref{eq:appendix_cov_expand_2} and~\eqref{eq:appendix_cov_expand_3}, the expression in~\eqref{eq:appendix_cov_expand_1} becomes equal to~\eqref{eq:cov_S}. 

Finally, since vectorization preserves the Frobenius norm, we have
\begin{equation}
	\|\mathbf{H}_{\mathrm{AS}}-\widehat{\mathbf{H}}_{\mathrm{S,AS}}\|_{\mathsf{F}}^{2}=\|\mathbf{h}_{\mathrm{AS}}-\widehat{\mathbf{h}}_{\mathrm{S,AS}}\|^{2}=\|\mathbf{e}_{\mathrm{S}}\|^{2}. \label{eq:vec_preservation}
\end{equation}
Using the identity $\mathbb{E}\{\|\mathbf{x}\|^{2}\}=\|\mathbb{E}\{\mathbf{x}\}\|^{2}+\tr(\cov\{\mathbf{x}\})$ for any random vector $\mathbf{x}$, and~\eqref{eq:vec_preservation}, the true Bayesian MSE at S is given by~\eqref{eq:true_MSE}. This completes the proof.

\section{\label{sec:proof_grad_Fc}Proof of Theorem~\ref{thm:grad_Fc}}

With $\mathbf{F}_{\mathrm{s}}$, $\boldsymbol{\theta}$ and $\tau$ being fixed, the complex-valued differential of $g_{\nu,\rho}(\mathbf{F},\boldsymbol{\theta},\tau)$ w.r.t. $\mathbf{F}_{\mathrm{c}}$ can be obtained as 
\begin{multline}
\op{d}g_{\nu,\rho}(\mathbf{F},\boldsymbol{\theta},\tau)=\op{d}\bar{\xi}_{\mathrm{pred}}(\mathbf{F},\bm{\theta})\\
-\big\{\nu+\tfrac{1}{\rho}f(\mathbf{F},\boldsymbol{\theta},\mathscr{C}_{\mathrm{AB}},\tau)\big\}\op{d}f(\mathbf{F},\boldsymbol{\theta},\mathscr{C}_{\mathrm{AB}},\tau).\label{eq:A-1}
\end{multline}
For the first term on the RHS, we have 
\begin{align}
 & \op{d}\bar{\xi}_{\mathrm{pred}}(\mathbf{F},\bm{\theta})\nonumber \\
=\  & \op{d}\frac{1}{\tr\big(\bm{\Sigma}_{\mathrm{AS}}\big)}\Big[\tr\big\{(\mathbf{I}-\mathbf{R}_{\mathrm{A}}\widetilde{\mathbf{X}})\widehat{\bm{\Sigma}}_{\mathrm{A,AS}}(\mathbf{I}-\mathbf{R}_{\mathrm{A}}\widetilde{\mathbf{X}})\herm\big\}\nonumber \\
 & \qquad\qquad\qquad\qquad+\tr\big\{\mathbf{R}_{\mathrm{A}}(\widehat{\bm{\Sigma}}_{\mathrm{A,ARS}}+\sigma_{\mathrm{S}}^{2}\mathbf{I})\mathbf{R}_{\mathrm{A}}\herm\big\}\Big]\nonumber \\
=\  & \frac{1}{\tr\big(\bm{\Sigma}_{\mathrm{AS}}\big)}\Big[\op{d}\tr\big\{(\mathbf{I}-\mathbf{R}_{\mathrm{A}}\widetilde{\mathbf{X}})\widehat{\bm{\Sigma}}_{\mathrm{A,AS}}(\mathbf{I}-\mathbf{R}_{\mathrm{A}}\widetilde{\mathbf{X}})\herm\big\}\nonumber \\
 & \qquad\qquad\qquad+\op{d}\tr\big\{\mathbf{R}_{\mathrm{A}}(\widehat{\bm{\Sigma}}_{\mathrm{A,ARS}}+\sigma_{\mathrm{S}}^{2}\mathbf{I})\mathbf{R}_{\mathrm{A}}\herm\big\}\Big].\label{eq:A-2}
\end{align}
By representing $\mathbf{R}_{\mathrm{A}}=\mathbf{R}_{\mathrm{A}1}\mathbf{R}_{\mathrm{A}2}^{-1}$, the expression for $\op{d}\tr\big\{(\mathbf{I}-\mathbf{R}_{\mathrm{A}}\widetilde{\mathbf{X}})\widehat{\bm{\Sigma}}_{\mathrm{A,AS}}(\mathbf{I}-\mathbf{R}_{\mathrm{A}}\widetilde{\mathbf{X}})\herm\big\}$ can be obtained as 
\begin{align}
 & \op{d}\tr\big\{(\mathbf{I}-\mathbf{R}_{\mathrm{A}}\widetilde{\mathbf{X}})\widehat{\bm{\Sigma}}_{\mathrm{A,AS}}(\mathbf{I}-\mathbf{R}_{\mathrm{A}}\widetilde{\mathbf{X}})\herm\big\}\nonumber \\
=\  & \tr\big\{\op{d}(\mathbf{I}-\mathbf{R}_{\mathrm{A}}\widetilde{\mathbf{X}})\mathbf{M}_{1}\big\}+\tr\big\{\mathbf{M}_{1}\herm\op{d}(\mathbf{I}-\mathbf{R}_{\mathrm{A}}\widetilde{\mathbf{X}})\herm\big\}\nonumber \\
=\  & \tr\big\{\mathbf{M}_{2}\big(\op{d}\mathbf{R}_{\mathrm{A}2}\big)\big\}-\tr\big\{\mathbf{M}_{3}\big(\op{d}\widetilde{\mathbf{X}}\herm\big)\big\}\nonumber \\
=\  & \tr\big\{\big(\mathbf{M}_{4\mathrm{c}}+\mathbf{M}_{5\mathrm{c}}\big)\op{d}\big(\mathbf{F}_{\mathrm{c}}\conj\otimes\mathbf{I}_{m_{\mathrm{S}}}\big)\big\}\nonumber \\
=\  & \tr\big\{\mathbf{M}_{\mathrm{c}}\op{d}\big(\mathbf{F}_{\mathrm{c}}\conj\otimes\mathbf{I}_{m_{\mathrm{S}}}\big)\big\}\nonumber \\
=\  & \vecop\trans\big(\mathbf{M}_{\mathrm{c}}\trans\big)\op{d}\big\{\vecop\big(\mathbf{F}_{\mathrm{c}}\conj\otimes\mathbf{I}_{m_{\mathrm{S}}}\big)\big\}\nonumber \\
=\  & \vecop\trans\big(\mathbf{M}_{\mathrm{c}}\trans\big)\big(\mathbf{I}_{m_{\mathrm{A}}}\otimes\mathbf{G}_{\mathrm{c}}\big)\op{d}\big\{\vecop\big(\mathbf{F}_{\mathrm{c}}\conj\big)\big\},\label{eq:A-3}
\end{align}
where
\begin{subequations}
\label{eq:def_M_comm}
\begin{align}
 & \!\!\!\!\mathbf{M}_{1}=\widehat{\bm{\Sigma}}_{\mathrm{A,AS}}(\mathbf{I}-\mathbf{R}_{\mathrm{A}}\widetilde{\mathbf{X}})\herm,\label{eq:M1}\\
 & \!\!\!\!\mathbf{M}_{2}=\mathbf{R}_{\mathrm{A}2}^{-1}\widetilde{\mathbf{X}}\mathbf{M}_{1}\mathbf{R}_{\mathrm{A}}+\mathbf{R}_{\mathrm{A}}\herm\mathbf{M}_{1}\herm\widetilde{\mathbf{X}}\herm\mathbf{R}_{\mathrm{A}2}^{-1},\label{eq:M2}\\
 & \!\!\!\!\mathbf{M}_{3}=\mathbf{R}_{\mathrm{A}2}^{-1}\widetilde{\mathbf{X}}\mathbf{M}_{1}\widehat{\bm{\Sigma}}_{\mathrm{A,AS}}+\mathbf{R}_{\mathrm{A}}\herm\mathbf{M}_{1}\herm,\label{eq:M3}\\
 & \!\!\!\!\mathbf{M}_{4\mathrm{c}}=\big(\mathbf{W}_{\mathrm{c}}\conj\otimes\mathbf{I}_{m_{\mathrm{S}}}\big)\big(\mathbf{M}_{2}\widetilde{\mathbf{X}}\widehat{\bm{\Sigma}}_{\mathrm{A,AS}}-\mathbf{M}_{3}\big),\label{eq:M4c}\\
 & \!\!\!\!\mathbf{M}_{5\mathrm{c}}=\big(\mathbf{W}_{\mathrm{c}}\conj\otimes\mathbf{I}_{m_{\mathrm{S}}}\big)\mathbf{M}_{2}\breve{\mathbf{X}}\widehat{\bm{\Sigma}}_{\mathrm{A,RS}}\big\{\big(\bm{\Theta}\conj\mathbf{H}_{\mathrm{AR}}\conj\big)\otimes\mathbf{I}_{m_{\mathrm{S}}}\big\},\!\!\label{eq:M5c}\\
 & \!\!\!\!\mathbf{M}_{\mathrm{c}}=\mathbf{M}_{4\mathrm{c}}+\mathbf{M}_{5\mathrm{c}},\label{eq:Mc}\\
 & \!\!\!\!\mathbf{G}_{\mathrm{c}}=(\mathbf{C}_{m_{\mathrm{S}}m_{\min}}\otimes\mathbf{I}_{m_{\mathrm{S}}})(\mathbf{I}_{m_{\min}}\otimes\vecop(\mathbf{I}_{m_{\mathrm{S}}})),\label{eq:Gc}
\end{align}
\end{subequations}
and $\mathbf{C}_{m_{\mathrm{X}}m_{\mathrm{Y}}}\in\mathbb{R}^{m_{\mathrm{X}}m_{\mathrm{Y}}\times m_{\mathrm{X}}m_{\mathrm{Y}}}$ is the commutation matrix. Similarly, the expression for $\op{d}\tr\big\{\mathbf{R}_{\mathrm{A}}(\widehat{\bm{\Sigma}}_{\mathrm{A,ARS}}+\sigma_{\mathrm{S}}^{2}\mathbf{I})\mathbf{R}_{\mathrm{A}}\herm\big\}$ is obtained as 
\begin{align}
 & \op{d}\tr\big\{\mathbf{R}_{\mathrm{A}}(\widehat{\bm{\Sigma}}_{\mathrm{A,ARS}}+\sigma_{\mathrm{S}}^{2}\mathbf{I})\mathbf{R}_{\mathrm{A}}\herm\big\}\nonumber \\
= & \tr\big\{\mathbf{U}_{1}\big(\op{d}\mathbf{R}_{\mathrm{A}}\big)\big\}+\tr\big\{\mathbf{U}_{2}\big(\op{d}\widehat{\bm{\Sigma}}_{\mathrm{A,ARS}}\big)\big\}+\tr\big\{\mathbf{U}_{1}\herm\big(\op{d}\mathbf{R}_{\mathrm{A}}\herm\big)\big\}\nonumber \\
= & \tr\big\{\mathbf{R}_{\mathrm{A}2}^{-1}\mathbf{U}_{1}\widehat{\bm{\Sigma}}_{\mathrm{A,AS}}\big(\op{d}\widetilde{\mathbf{X}}\herm\big)\big\}\nonumber \\
 & \qquad\qquad+\tr\big\{\mathbf{U}_{2}\breve{\mathbf{X}}\widehat{\bm{\Sigma}}_{\mathrm{A,RS}}\big(\op{d}\breve{\mathbf{X}}\herm\big)\big\}-\tr\big\{\mathbf{U}_{3}\big(\op{d}\mathbf{R}_{\mathrm{A}2}\big)\big\}\nonumber \\
= & \tr\big\{\mathbf{U}_{4}\big(\op{d}\widetilde{\mathbf{X}}\herm\big)\big\}+\tr\big\{\mathbf{U}_{5}\big(\op{d}\breve{\mathbf{X}}\herm\big)\big\}\nonumber \\
= & \vecop\trans\big(\mathbf{U}_{\mathrm{c}}\trans\big)\op{d}\big\{\vecop\big(\mathbf{F}_{\mathrm{c}}\conj\otimes\mathbf{I}_{m_{\mathrm{S}}}\big)\big\}\nonumber \\
= & \vecop\trans\big(\mathbf{U}_{\mathrm{c}}\trans\big)\big(\mathbf{I}_{m_{\mathrm{A}}}\otimes\mathbf{G}_{\mathrm{c}}\big)\op{d}\big\{\vecop\big(\mathbf{F}_{\mathrm{c}}\conj\big)\big\},\label{eq:A-4}
\end{align}
where 
\begin{subequations}
\label{eq:def_U_comm}
\begin{align}
 & \mathbf{U}_{1}=\big(\widehat{\bm{\Sigma}}_{\mathrm{A,ARS}}+\sigma_{\mathrm{S}}^{2}\mathbf{I}\big)\mathbf{R}_{\mathrm{A}}\herm,\label{eq:U1}\\
 & \mathbf{U}_{2}=\mathbf{R}_{\mathrm{A}}\herm\mathbf{R}_{\mathrm{A}},\label{eq:U2}\\
 & \mathbf{U}_{3}=\mathbf{R}_{\mathrm{A}2}^{-1}\mathbf{U}_{1}\mathbf{R}_{\mathrm{A}}+\mathbf{R}_{\mathrm{A}}\herm\mathbf{U}_{1}\herm\mathbf{R}_{\mathrm{A}2}^{-1},\label{eq:U3}\\
 & \mathbf{U}_{4}=\big(\mathbf{R}_{\mathrm{A}2}^{-1}\mathbf{U}_{1}-\mathbf{U}_{3}\widetilde{\mathbf{X}}\big)\widehat{\bm{\Sigma}}_{\mathrm{A,AS}},\label{eq:U4}\\
 & \mathbf{U}_{5}=\big(\mathbf{U}_{2}-\mathbf{U}_{3}\big)\breve{\mathbf{X}}\widehat{\bm{\Sigma}}_{\mathrm{A,RS}},\label{eq:U5}\\
 & \mathbf{U}_{\mathrm{c}}=\big(\mathbf{W}_{\mathrm{c}}\conj\otimes\mathbf{I}_{m_{\mathrm{S}}}\big)\big[\mathbf{U}_{4}+\mathbf{U}_{5}\big\{\big(\bm{\Theta}\conj\mathbf{H}_{\mathrm{AR}}\conj\big)\otimes\mathbf{I}_{m_{\mathrm{S}}}\big\}\big].\label{eq:Uc}
\end{align}
\end{subequations}
Therefore, using~\eqref{eq:A-2}, \eqref{eq:A-3} and~\eqref{eq:A-4}, one can write 
\begin{multline}
\!\!\nabla_{\mathbf{F}_{\mathrm{c}}}\bar{\xi}_{\mathrm{pred}}(\mathbf{F},\bm{\theta})=\frac{1}{\tr\big(\bm{\Sigma}_{\mathrm{AS}}\big)}\\
\!\!\times\unvec_{m_{\mathrm{A}}\times m_{\min}}\big(\big[\vecop\trans\big(\mathbf{M}_{\mathrm{c}}\trans+\mathbf{U}_{\mathrm{c}}\trans\big)\big(\mathbf{I}_{m_{\mathrm{A}}}\otimes\mathbf{G}_{\mathrm{c}}\big)\big]\trans\big).\!\!\label{eq:A-5}
\end{multline}
Next, we obtain $\op{d}f(\mathbf{F},\boldsymbol{\theta},\mathscr{C}_{\mathrm{AB}},\tau)$ as follows: 
\begin{align}
 & \op{d}f(\mathbf{F},\boldsymbol{\theta},\mathscr{C}_{\mathrm{AB}},\tau)-\frac{1}{\mathscr{C}_{\mathrm{AB}}}\op{d}C_{\mathrm{AB}}(\mathbf{F},\boldsymbol{\theta})\nonumber \\
= & -\frac{1}{\mathscr{C}_{\mathrm{AB}}}\op{d}\big\{\ln\det\big(\mathbf{I}+\widehat{\mathbf{Z}}_{\mathrm{AB}}\mathbf{F}_{\mathrm{c}}\mathbf{F}_{\mathrm{c}}\herm\widehat{\mathbf{Z}}_{\mathrm{AB}}\herm\mathbf{Q}_{\mathrm{AB}}^{-1}\big)\big\}\nonumber \\
= & \frac{1}{\mathscr{C}_{\mathrm{AB}}}\op{d}\big[\ln\det\big(\mathbf{Q}_{\mathrm{AB}}\big)-\ln\det\big(\mathbf{E}_{\mathrm{AB}}\big)\big]\nonumber \\
= & \frac{1}{\mathscr{C}_{\mathrm{AB}}}\big[\tr\big(\mathbf{Q}_{\mathrm{AB}}^{-1}\op{d}\mathbf{Q}_{\mathrm{AB}}\big)-\tr\big(\mathbf{E}_{\mathrm{AB}}^{-1}\op{d}\mathbf{E}_{\mathrm{AB}}\big)\big]\nonumber \\
= & \frac{1}{\mathscr{C}_{\mathrm{AB}}}\big[\tr\big(\big\{\mathbf{Q}_{\mathrm{AB}}^{-1}-\mathbf{E}_{\mathrm{AB}}^{-1}\big\}\op{d}\mathbf{Q}_{\mathrm{AB}}\big)\nonumber \\
 & \qquad\qquad\qquad\qquad-\tr\big(\widehat{\mathbf{Z}}_{\mathrm{AB}}\herm\mathbf{E}_{\mathrm{AB}}^{-1}\widehat{\mathbf{Z}}_{\mathrm{AB}}\mathbf{F}_{\mathrm{c}}\op{d}\mathbf{F}_{\mathrm{c}}\herm\big)\big].\label{eq:A-6}
\end{align}
where $\mathbf{E}_{\mathrm{AB}}=\mathbf{Q}_{\mathrm{AB}}+\widehat{\mathbf{Z}}_{\mathrm{AB}}\mathbf{F}_{\mathrm{c}}\mathbf{F}_{\mathrm{c}}\herm\widehat{\mathbf{Z}}_{\mathrm{AB}}\herm$. Then using~\eqref{eq:A-6} and~\cite[Table III]{07-TSP_Differential}, one can obtain
\begin{multline}
\nabla_{\mathbf{F}_{\mathrm{c}}}f(\mathbf{F},\boldsymbol{\theta},\mathscr{C}_{\mathrm{AB}},\tau)=\frac{1}{\mathscr{C}_{\mathrm{AB}}}\big[\tr\big(\mathbf{D}_{\mathrm{AB}}\big)\big(\varsigma_{\mathrm{AB}}^{2}\mathbf{I}_{m_{\mathrm{A}}}\\
+\varsigma_{\mathrm{RB}}^{2}\mathbf{H}_{\mathrm{AR}}\herm\mathbf{H}_{\mathrm{AR}}\big)-\widehat{\mathbf{Z}}_{\mathrm{AB}}\herm\mathbf{E}_{\mathrm{AB}}^{-1}\widehat{\mathbf{Z}}_{\mathrm{AB}}\big]\mathbf{F}_{\mathrm{c}},\label{eq:A-7}
\end{multline}
where $\mathbf{D}_{\mathrm{AB}}=\mathbf{Q}_{\mathrm{AB}}^{-1}-\mathbf{E}_{\mathrm{AB}}^{-1}$. Using~\eqref{eq:A-1}, \eqref{eq:A-5} and~\eqref{eq:A-7}, a closed-form expression for $\nabla_{\mathbf{F}_{\mathrm{c}}}g_{\nu,\rho}(\mathbf{F},\boldsymbol{\theta},\tau)$ is given by~\eqref{eq:grad_Fc_Closed}. 

Analogous to~\eqref{eq:A-5}, we obtain the expression for $\nabla_{\mathbf{F}_{\mathrm{s}}}\bar{\xi}_{\mathrm{pred}}(\mathbf{F},\bm{\theta})$ as 
\begin{multline}
\!\!\nabla_{\mathbf{F}_{\mathrm{s}}}\bar{\xi}_{\mathrm{pred}}(\mathbf{F},\bm{\theta})=\frac{1}{\tr\big(\bm{\Sigma}_{\mathrm{AS}}\big)}\\
\!\!\times\unvec_{m_{\mathrm{A}}\times m_{\mathrm{A}}}\big(\big[\vecop\trans\big(\mathbf{M}_{\mathrm{s}}\trans+\mathbf{U}_{\mathrm{s}}\trans\big)\big(\mathbf{I}_{m_{\mathrm{A}}}\otimes\mathbf{G}_{\mathrm{s}}\big)\big]\trans\big),\!\!\label{eq:A-8}
\end{multline}
where 
\begin{subequations}
\label{eq:def_M_sense}
\begin{align}
 & \!\!\!\!\mathbf{M}_{\mathrm{s}}=\mathbf{M}_{4\mathrm{s}}+\mathbf{M}_{5\mathrm{s}},\label{eq:Ms}\\
 & \!\!\!\!\mathbf{M}_{4\mathrm{s}}=\big(\mathbf{W}_{\mathrm{s}}\conj\otimes\mathbf{I}_{m_{\mathrm{S}}}\big)\big(\mathbf{M}_{2}\widetilde{\mathbf{X}}\widehat{\bm{\Sigma}}_{\mathrm{A,AS}}-\mathbf{M}_{3}\big),\label{eq:M4s}\\
 & \!\!\!\!\mathbf{M}_{5\mathrm{s}}=\big(\mathbf{W}_{\mathrm{s}}\conj\otimes\mathbf{I}_{m_{\mathrm{S}}}\big)\mathbf{M}_{2}\breve{\mathbf{X}}\widehat{\bm{\Sigma}}_{\mathrm{A,RS}}\big\{\big(\bm{\Theta}\conj\mathbf{H}_{\mathrm{AR}}\conj\big)\otimes\mathbf{I}_{m_{\mathrm{S}}}\big\},\!\!\label{eq:M5s}\\
 & \!\!\!\!\mathbf{G}_{\mathrm{s}}=(\mathbf{C}_{m_{\mathrm{S}}m_{\mathrm{A}}}\otimes\mathbf{I}_{m_{\mathrm{S}}})(\mathbf{I}_{m_{\mathrm{A}}}\otimes\vecop(\mathbf{I}_{m_{\mathrm{S}}})),\label{eq:Gs}\\
 & \!\!\!\!\mathbf{U}_{\mathrm{s}}=\big(\mathbf{W}_{\mathrm{s}}\conj\otimes\mathbf{I}_{m_{\mathrm{S}}}\big)\big[\mathbf{U}_{4}+\mathbf{U}_{5}\big\{\big(\bm{\Theta}\conj\mathbf{H}_{\mathrm{AR}}\conj\big)\otimes\mathbf{I}_{m_{\mathrm{S}}}\big\}\big].\label{eq:Us}
\end{align}
\end{subequations}
 Similarly, following~\eqref{eq:A-7}, we obtain 
\begin{multline}
\nabla_{\mathbf{F}_{\mathrm{s}}}f(\mathbf{F},\boldsymbol{\theta},\mathscr{C}_{\mathrm{AB}},\tau)=\frac{1}{\mathscr{C}_{\mathrm{AB}}}\tr\big(\mathbf{D}_{\mathrm{AB}}\big)\\
\times\big(\widehat{\mathbf{Z}}_{\mathrm{AB}}\herm\widehat{\mathbf{Z}}_{\mathrm{AB}}+\varsigma_{\mathrm{AB}}^{2}\mathbf{I}_{m_{\mathrm{A}}}+\varsigma_{\mathrm{RB}}^{2}\mathbf{H}_{\mathrm{AR}}\herm\mathbf{H}_{\mathrm{AR}}\big)\mathbf{F}_{\mathrm{s}}.\label{eq:A-9}
\end{multline}
Using~\eqref{eq:A-8} and~\eqref{eq:A-9}, a closed-form expression for $\nabla_{\mathbf{F}_{\mathrm{s}}}g_{\nu,\rho}(\mathbf{F},\boldsymbol{\theta},\tau)$ is given by~\eqref{eq:grad_Fs_Closed}. This completes the proof. 

\section{\label{sec:proof_grad_theta}Proof of Theorem~\ref{thm:grad_theta}}

When $\mathbf{F}$ and $\tau$ are held fixed, the complex-valued differential of $g_{\nu,\rho}(\mathbf{F},\boldsymbol{\theta},\tau)$ w.r.t. $\bm{\theta}$ can be obtained as 
\begin{multline}
\op{d}g_{\nu,\rho}(\mathbf{F},\boldsymbol{\theta},\tau)=\op{d}\bar{\xi}_{\mathrm{pred}}(\mathbf{F},\bm{\theta})\\
-\big\{\nu+\tfrac{1}{\rho}f(\mathbf{F},\boldsymbol{\theta},\mathscr{C}_{\mathrm{AB}},\tau)\big\}\op{d}f(\mathbf{F},\boldsymbol{\theta},\mathscr{C}_{\mathrm{AB}},\tau).\label{eq:B-1}
\end{multline}
For the first term on the RHS, we have 
\begin{align}
 & \op{d}\bar{\xi}_{\mathrm{pred}}(\mathbf{F},\bm{\theta})\nonumber \\
=\  & \op{d}\frac{1}{\tr\big(\bm{\Sigma}_{\mathrm{AS}}\big)}\Big[\tr\big\{(\mathbf{I}-\mathbf{R}_{\mathrm{A}}\widetilde{\mathbf{X}})\widehat{\bm{\Sigma}}_{\mathrm{A,AS}}(\mathbf{I}-\mathbf{R}_{\mathrm{A}}\widetilde{\mathbf{X}})\herm\big\}\nonumber \\
 & \qquad\qquad\qquad\qquad+\tr\big\{\mathbf{R}_{\mathrm{A}}(\widehat{\bm{\Sigma}}_{\mathrm{A,ARS}}+\sigma_{\mathrm{S}}^{2}\mathbf{I})\mathbf{R}_{\mathrm{A}}\herm\big\}\Big]\nonumber \\
=\  & \frac{1}{\tr\big(\bm{\Sigma}_{\mathrm{AS}}\big)}\Big[\op{d}\tr\big\{(\mathbf{I}-\mathbf{R}_{\mathrm{A}}\widetilde{\mathbf{X}})\widehat{\bm{\Sigma}}_{\mathrm{A,AS}}(\mathbf{I}-\mathbf{R}_{\mathrm{A}}\widetilde{\mathbf{X}})\herm\big\}\nonumber \\
 & \qquad\qquad\qquad+\op{d}\tr\big\{\mathbf{R}_{\mathrm{A}}(\widehat{\bm{\Sigma}}_{\mathrm{A,ARS}}+\sigma_{\mathrm{S}}^{2}\mathbf{I})\mathbf{R}_{\mathrm{A}}\herm\big\}\Big].\label{eq:B-2}
\end{align}
The expression for $\op{d}\tr\big\{(\mathbf{I}-\mathbf{R}_{\mathrm{A}}\widetilde{\mathbf{X}})\widehat{\bm{\Sigma}}_{\mathrm{A,AS}}(\mathbf{I}-\mathbf{R}_{\mathrm{A}}\widetilde{\mathbf{X}})\herm\big\}$ can be obtained as 
\begin{align}
 & \op{d}\tr\big\{(\mathbf{I}-\mathbf{R}_{\mathrm{A}}\widetilde{\mathbf{X}})\widehat{\bm{\Sigma}}_{\mathrm{A,AS}}(\mathbf{I}-\mathbf{R}_{\mathrm{A}}\widetilde{\mathbf{X}})\herm\big\}\nonumber \\
=\  & \tr\big\{\op{d}(\mathbf{I}-\mathbf{R}_{\mathrm{A}}\widetilde{\mathbf{X}})\mathbf{M}_{1}\big\}+\tr\big\{\mathbf{M}_{1}\herm\op{d}(\mathbf{I}-\mathbf{R}_{\mathrm{A}}\widetilde{\mathbf{X}})\herm\big\}\nonumber \\
=\  & \tr\big\{\mathbf{R}_{\mathrm{A}2}^{-1}\mathbf{V}_{1}\mathbf{R}_{\mathrm{A}2}^{-1}\big(\op{d}\mathbf{R}_{\mathrm{A}2}\big)\big\}\nonumber \\
=\  & \tr\big\{\mathbf{V}_{2}\op{d}\big[\big(\bm{\Theta}\conj\mathbf{H}_{\mathrm{AR}}\conj\mathbf{X}\conj\big)\otimes\mathbf{I}_{m_{\mathrm{S}}}\big]\big\} = \tr\big\{\mathbf{V}_{3}\op{d}\big(\bm{\Theta}\conj\otimes\mathbf{I}_{m_{\mathrm{S}}}\big)\big\}\nonumber \\
=\  & \vecop\trans\big(\mathbf{V}_{3}\trans\big)\big(\mathbf{I}_{m_{\mathrm{R}}}\otimes\mathbf{G}_{\theta}\big)\op{d}\big\{\vecop\big(\bm{\Theta}\conj\big)\big\}.\label{eq:B-3}
\end{align}
where $\mathbf{M}_{1}$ is given in~\eqref{eq:M1}, $\mathbf{V}_{1}=\widetilde{\mathbf{X}}\mathbf{M}_{1}\mathbf{R}_{\mathrm{A}1}+\big(\widetilde{\mathbf{X}}\mathbf{M}_{1}\mathbf{R}_{\mathrm{A}1}\big)\herm$, $\mathbf{V}_{2}=\mathbf{R}_{\mathrm{A}2}^{-1}\mathbf{V}_{1}\mathbf{R}_{\mathrm{A}2}^{-1}\breve{\mathbf{X}}\widehat{\bm{\Sigma}}_{\mathrm{A,RS}}$, $\mathbf{V}_{3}=\big[\big(\mathbf{H}_{\mathrm{AR}}\conj\mathbf{X}\conj\big)\otimes\mathbf{I}_{m_{\mathrm{S}}}\big]\mathbf{V}_{2}$, $\mathbf{G}_{\theta}=\big(\mathbf{C}_{m_{\mathrm{R}}m_{\mathrm{S}}}\otimes\mathbf{I}_{m_{\mathrm{S}}}\big)\big(\mathbf{I}_{m_{\mathrm{R}}}\otimes\vecop\big(\mathbf{I}_{m_{\mathrm{S}}}\big)\big)$, and 
$\mathbf{C}_{m_{\mathrm{R}}m_{\mathrm{S}}}$ is the commutation matrix. Similarly, for $\op{d}\tr\big\{\mathbf{R}_{\mathrm{A}}(\widehat{\bm{\Sigma}}_{\mathrm{A,ARS}}+\sigma_{\mathrm{S}}^{2}\mathbf{I})\mathbf{R}_{\mathrm{A}}\herm\big\}$, we have 
\begin{align}
 & \op{d}\tr\big\{\mathbf{R}_{\mathrm{A}}(\widehat{\bm{\Sigma}}_{\mathrm{A,ARS}}+\sigma_{\mathrm{S}}^{2}\mathbf{I})\mathbf{R}_{\mathrm{A}}\herm\big\}\nonumber \\
=\  & \tr\big\{\mathbf{U}_{1}\big(\op{d}\mathbf{R}_{\mathrm{A}}\big)\big\}\!+\!\tr\big\{\mathbf{U}_{2}\big(\op{d}\widehat{\bm{\Sigma}}_{\mathrm{A,ARS}}\big)\big\}\!+\!\tr\big\{\mathbf{U}_{1}\herm\big(\op{d}\mathbf{R}_{\mathrm{A}}\herm\big)\big\}\nonumber \\
=\  & \tr\big\{\mathbf{J}_{1}\big(\op{d}\mathbf{R}_{\mathrm{A}2}^{-1}\big)\big\}+\tr\big\{\mathbf{J}_{2}\big(\op{d}\breve{\mathbf{X}}\herm\big)\big\} = \tr\big\{\mathbf{J}_{3}\big(\op{d}\breve{\mathbf{X}}\herm\big)\big\} \nonumber \\
=\  & \tr\big\{\mathbf{J}_{4}\op{d}\big(\bm{\Theta}\conj\otimes\mathbf{I}_{m_{\mathrm{S}}}\big)\big\} = \vecop\trans\big(\mathbf{J}_{4}\trans\big)\op{d}\big\{\vecop\big(\bm{\Theta}\conj\otimes\mathbf{I}_{m_{\mathrm{S}}}\big)\big\} \nonumber \\
=\  & \vecop\trans\big(\mathbf{J}_{4}\trans\big)\big(\mathbf{I}_{m_{\mathrm{R}}}\otimes\mathbf{G}_{\theta}\big)\op{d}\big\{\vecop\big(\bm{\Theta}\conj\big)\big\},\label{eq:B-4}
\end{align}
where $\mathbf{U}_{1}$ is given in~\eqref{eq:U1}, $\mathbf{U}_{2}$ is given in~\eqref{eq:U2}, $\mathbf{J}_{1}=\mathbf{U}_{1}\mathbf{R}_{\mathrm{A}1}+\big(\mathbf{U}_{1}\mathbf{R}_{\mathrm{A}1}\big)\herm$, $\mathbf{J}_{2}=\mathbf{U}_{2}\breve{\mathbf{X}}\widehat{\bm{\Sigma}}_{\mathrm{A,RS}}$, $\mathbf{J}_{3}=\mathbf{J}_{2}-\mathbf{R}_{\mathrm{A2}}^{-1}\mathbf{J}_{1}\mathbf{R}_{\mathrm{A}2}^{-1}\breve{\mathbf{X}}\widehat{\bm{\Sigma}}_{\mathrm{A,RS}}$, and $\mathbf{J}_{4}=\big\{\big(\mathbf{H}_{\mathrm{AR}}\conj\mathbf{X}\conj\big)\otimes\mathbf{I}_{m_{\mathrm{S}}}\big\}\mathbf{J}_{3}$. 
Therefore, using~\eqref{eq:B-2}, \eqref{eq:B-3} and~\eqref{eq:B-4}, a closed-form expression for $\nabla_{\bm{\theta}}\bar{\xi}_{\mathrm{pred}}(\mathbf{F},\bm{\theta})$ is given by 
\begin{multline}
\nabla_{\bm{\theta}}\bar{\xi}_{\mathrm{pred}}(\mathbf{F},\bm{\theta})=\vecd\Big[\frac{1}{\tr\big(\bm{\Sigma}_{\mathrm{AS}}\big)}\\
\times\unvec_{m_{\mathrm{R}}\times m_{\mathrm{R}}}\big(\big[\vecop\trans\big(\mathbf{V}_{3}\trans+\mathbf{J}_{4}\trans\big)\big(\mathbf{I}_{m_{\mathrm{R}}}\otimes\mathbf{G}_{\theta}\big)\big]\trans\big)\Big].\!\!\!\!\label{eq:B-5}
\end{multline}
Moreover, a closed-form expression for $\op{d}f(\mathbf{F},\boldsymbol{\theta},\mathscr{C}_{\mathrm{AB}},\tau)$ is obtained as follows:
\begin{align}
 & \op{d}f(\mathbf{F},\boldsymbol{\theta},\mathscr{C}_{\mathrm{AB}},\tau)=-\frac{1}{\mathscr{C}_{\mathrm{AB}}}\operatorname{d}C_{\mathrm{AB}}(\mathbf{F},\boldsymbol{\theta})\nonumber \\
=\  & \frac{1}{\mathscr{C}_{\mathrm{AB}}}\big[\tr\big(\mathbf{Q}_{\mathrm{AB}}^{-1}\op{d}\mathbf{Q}_{\mathrm{AB}}\big)-\tr\big(\mathbf{E}_{\mathrm{AB}}^{-1}\op{d}\mathbf{E}_{\mathrm{AB}}\big)\big]\nonumber \\
=\  & \frac{1}{\mathscr{C}_{\mathrm{AB}}}\big[\tr\big(\big\{\mathbf{D}_{\mathrm{AB}}\big\}\op{d}\mathbf{Q}_{\mathrm{AB}}\big)-\tr\big(\mathbf{E}_{\mathrm{AB}}^{-1}\widehat{\mathbf{Z}}_{\mathrm{AB}}\mathbf{F}_{\mathrm{c}}\mathbf{F}_{\mathrm{c}}\herm\op{d}\widehat{\mathbf{Z}}_{\mathrm{AB}}\herm\big)\big]\nonumber \\
=\  & \frac{1}{\mathscr{C}_{\mathrm{AB}}}\tr\big\{\mathbf{L}_{\mathrm{AB}}\op{d}\widehat{\mathbf{Z}}_{\mathrm{AB}}\herm\big\}=\frac{1}{\mathscr{C}_{\mathrm{AB}}}\tr\big\{\widehat{\mathbf{H}}_{\mathrm{RB}}\herm\mathbf{L}_{\mathrm{AB}}\mathbf{H}_{\mathrm{AR}}\herm\op{d}\bm{\Theta}\herm\big\},\label{eq:B-6}
\end{align}
where $\mathbf{E}_{\mathrm{AB}}$ and $\mathbf{D}_{\mathrm{AB}}$ are given in~Appendix~\ref{sec:proof_grad_Fc}, and $\mathbf{L}_{\mathrm{AB}}=\mathbf{D}_{\mathrm{AB}}\widehat{\mathbf{Z}}_{\mathrm{AB}}\mathbf{F}_{\mathrm{s}}\mathbf{F}_{\mathrm{s}}\herm-\mathbf{E}_{\mathrm{AB}}^{-1}\widehat{\mathbf{Z}}_{\mathrm{AB}}\mathbf{F}_{\mathrm{c}}\mathbf{F}_{\mathrm{c}}\herm$. Hence, one can obtain $\nabla_{\bm{\theta}}f(\mathbf{F},\boldsymbol{\theta},\mathscr{C}_{\mathrm{AB}},\tau)$ is given by 
\begin{equation}
\nabla_{\bm{\theta}}f(\mathbf{F},\boldsymbol{\theta},\mathscr{C}_{\mathrm{AB}},\tau)=\frac{1}{\mathscr{C}_{\mathrm{AB}}}\vecd\big(\widehat{\mathbf{H}}_{\mathrm{RB}}\herm\mathbf{L}_{\mathrm{AB}}\mathbf{H}_{\mathrm{AR}}\herm\big).\label{eq:B-7}
\end{equation}
Using~\eqref{eq:B-5} and~\eqref{eq:B-7}, a closed-form expression for $\nabla_{\bm{\theta}}g_{\nu,\rho}(\mathbf{F},\boldsymbol{\theta},\tau)$ is given by~\eqref{eq:grad_theta_Closed}; this concludes the proof. 

\bibliographystyle{IEEEtran}
\bibliography{references}

\end{document}